\documentclass[11pt]{article}
\usepackage[capposition=top]{floatrow}
\usepackage{amssymb,amsmath,amsthm,amsfonts,changepage,enumitem,eurosym,geometry,ulem,graphicx,caption,color,setspace,sectsty,comment,footmisc,caption,pdflscape,subcaption,array,xr-hyper,hyperref,pdfsync,enumerate,ragged2e,booktabs,adjustbox}
\usepackage{threeparttable}
\usepackage[square,sort&compress,sectionbib]{natbib}
\usepackage{bbm}
\usepackage{datetime}
\newdateformat{monthyeardate}{\monthname[\THEMONTH] \THEYEAR}
\normalem
\allowdisplaybreaks
  \hypersetup{ 
        colorlinks   = true,
        citecolor    = blue,
        linkcolor    = blue,
        urlcolor     = black,
    }
    
\makeatletter
\newcommand*{\addFileDependency}[1]{
  \typeout{(#1)}
  \@addtofilelist{#1}
  \IfFileExists{#1}{}{\typeout{No file #1.}}
}
\makeatother

\onehalfspacing
\geometry{verbose,tmargin=1in,bmargin=1in,lmargin=1in,rmargin=1in,headheight=1in,headsep=1in,footskip=0.5in}

\newtheorem{theorem}{Theorem}

\newtheorem{proposition}{Proposition}
\newtheorem{assumption}{Assumption}

\theoremstyle{plain}

\begin{document}
\setlength{\abovedisplayskip}{8pt}
\setlength{\belowdisplayskip}{8pt}

\begin{titlepage}
\title{\singlespacing Instrumental Variables with Time-Varying Exposure: \\ Dynamic Effects of Revascularization on Quality of Life\thanks{We thank Michal Kolesar and Amanda Kowalski for helpful comments; Fatima Djalalova provided exceptional research assistance. This manuscript was prepared using ISCHEMIA Research Materials obtained from the NHLBI Biologic Specimen and Data Repository Information Coordinating Center and does not necessarily reflect the opinions or views of ISCHEMIA or the NHLBI}   }
\author{Joshua D. Angrist\thanks{ MIT Department of Economics and NBER. Email: \href{mailto:angrist@mit.edu}{angrist@mit.edu}} \\   Bruno Ferman\thanks{Sao Paulo School of Economics, FGV. Email: \href{mailto:bruno.ferman@fgv.br}{bruno.ferman@fgv.br}}\and Carol Gao\thanks{ MIT Operations Research Center. Email: \href{mailto:carolgao@mit.edu}{carolgao@mit.edu}}  \\ Peter  Hull\thanks{ Brown University Department of Economics and NBER. Email: \href{mailto:peter_hull@brown.edu}{peter\_hull@brown.edu}} \and Otávio Tecchio\thanks{MIT Department of Economics. Email: \href{mailto:otecchio@mit.edu}{otecchio@mit.edu}}  
\\ Robert W. Yeh\thanks{Smith Center for Outcomes Research and Division of Cardiovascular Medicine, Department of Medicine, Beth Israel Deaconess Medical Center Email: \href{mailto:ryeh@bidmc.harvard.edu}{ryeh@bidmc.harvard.edu}}
}
\monthyeardate
\maketitle

\begin{abstract}
\begin{singlespace}
\noindent 
\par 

\noindent This paper develops instrumental variables (IV) estimators for dynamic causal effects in randomized trials with imperfect compliance. These methods are applied to a randomized trial that assigned patients with ischemic heart disease to either an invasive treatment arm centered on revascularization or a control group meant to receive non-invasive medical therapy. As is common in such ``strategy trials,'' many participants assigned to treatment remained untreated while many assigned to control crossed over into treatment.  Protocol non-compliance causes ITT estimates to diverge from the effect of treatment received, while conventional per-protocol analyses that condition on treatment received are compromised by selection bias. Extending the static potential-outcomes IV framework, the methods here identify average causal effects of treatment for dynamic compliers, the set of trial participants who comply with trial protocol at different follow-up horizons.  IV estimates of revascularization effects on compliers’ quality of life are markedly larger and more sustained than previously reported ITT and per-protocol estimates. We also show how to estimate average characteristics and marginal potential outcome means for dynamic compliers.  These results are used to explain confounding in as-treated per-protocol estimates.

\end{singlespace}
\end{abstract}

\onehalfspacing
\setcounter{page}{0}
\thispagestyle{empty}
\end{titlepage}
 
\section{Introduction}

Many patients with ischemic heart disease undergo revascularization, an invasive strategy involving percutaneous coronary intervention (PCI) and/or coronary artery bypass grafting (CABG). These are invasive, resource-intensive, and potentially risky procedures. Alternatively, ischemia patients may be treated conservatively with a combination of lifestyle changes and drugs. Whether the benefits of revascularization outweigh the associated risks and costs remains controversial. 

The ISCHEMIA trial aimed to assess the benefits of revascularization in a randomized clinical trial (RCT) contrasting an invasive strategy centered on revascularization with conservative treatment involving medical therapy alone. The intention-to-treat (ITT) estimates that are the focus of ISCHEMIA analyses to date compare trial participants by assigned treatment arm. These show no statistically significant difference in mortality, while ITT estimates of assignment effects on angina frequency and other Seattle Angina Questionnaire (SAQ) domain scores indicate modest and fading effects on quality of life and angina frequency \citep{maron2020,spertus2020ckd,spertus2020}.   

These findings notwithstanding, there's more to learn from the ISCHEMIA trial than has been previously reported. ITT estimates are affected by substantial treatment group non-adherence (i.e., some assigned to invasive treatment were not revascularized) along with control group crossovers (i.e., some assigned to conservative treatment were revascularized). In other words, assignment to the invasive treatment group boosted revascularization rates substantially, but not deterministically. Comparisons by treatment assigned reflect a mix of causal effects of revascularization for trial participants who comply with the trial protocol and null effects for participants whose revascularization status is unchanged by random assignment \citep{pnas2023}. Conventional per protocol estimators attempt to undo the resulting dilution by conditioning on treatment received. But treatment received is typically far from randomly assigned: patients and their doctors choose whether to revascularize, a choice determined in part by patient frailty and prognosis.  

The ISCHEMIA trial ran for five years, a period during which trial participants assigned conservative medical therapy were free to cross over to revascularization. The timing of noncompliance varied across participants in both arms of the ISCHEMIA trial. Most assigned to revascularize did so shortly after random assignment. Some control-group crossovers were also revascularized shortly after random assignment, but many waited years before invasive treatment.  Consequently, in each follow-up wave, control-group crossovers include a mix of recently-revascularized patients and patients who were revascularized several years ago. This time-varying exposure has implications for estimating dynamic treatment effects, which might emerge if revascularization benefits fade, and is common in strategy trials.\footnote{For example, the recent British Heart Foundation SENIOR-RITA trial is affected by non-adherence and time-varying crossover rates of magnitudes similar to those in ISCHEMIA \citep{kunadian2024}.}

These problems are addressed here with the method of instrumental variables (IV), using random assignment as an instrument for time-varying exposure to generate new estimates of revascularization effects on quality-of-life. Our methodological discussion starts with a theoretical result motivated by compliance behavior seen in the ISCHEMIA data. This approach exploits the fact that random assignment induces immediate revascularization for participants assigned to invasive treatment, but does not otherwise change the distribution of revascularization exposure by much. Following \cite{newkappa2025}, we show this behavior is sufficient (when added to standard IV assumptions) to identify effects of immediate exposure averaged over all possible levels of counterfactual exposure. 

We then turn to identification of incremental and cumulative average causal effects of time-varying exposure, for each possible exposure value, under a set of identifying assumptions well-suited to strategy trials like ISCHEMIA. This identification strategy exploits novel \emph{wave ignorability} assumptions that leverage the repeated measurement of outcomes across multiple follow-up waves. Both strategies identify average causal effects of revascularization exposure for patients revascularized or revascularized sooner as a result of invasive assignment. Finally, we show how to estimate mean characteristics of variables like baseline SAQ scores for trial participants---labeled \emph{dynamic compliers}---whose exposure is changed by random assignment. These theoretical results should be applicable in a wide range of settings. An appendix extends identification results for dynamic compliers to marginal distributions of potential outcomes.

Application of these new IV tools to the ISCHEMIA trial reveals quality of life gains from revascularization that are substantially greater and more sustained than previously reported. IV estimates suggest revascularization yields sustained gains in SAQ summary scores of around four points over five years, while revascularization boosts angina frequency scores by a sustained three points after an initial-year effect spike of 5-6 points. Conventional ITT estimates of revascularization effects on SAQ and angina scores, by contrast, fade to around two points over a five-year horizon. As-treated per-protocol estimates similarly fade. We show that the latter result is explained by the growing share of relatively sick control-group crossovers  and the fact that treatment-group nonadherents are an increasingly healthy group.   

Our work builds on a large literature in statistics, biostatistics, and epidemiology that develops estimators for causal effects of time-varying treatment exposure (including, for instance, \cite{robins1994snm,RobinsHernan2008TimeVaryingExposures,hernan_robins_2020}). Much of this literature relies on sequential conditional independence assumptions, such as those used in foundational g-methods and structural nested models \citep{robins1998structural,Robins1997ComplexLongitudinal,Robins2000MSMvsSNM,robins2004optimal} and related approaches \citep[e.g.,][]{Qu2020_general_framework_treatment_effects,Luo2022_adherers_multiple_imputation,OlarteParra2022_hypothetical_estimands,Rytgaard2022ContinuousTimeTMLE}. IV methods, by contrast, rely on the initial randomization of treatment offers along with exclusion and monotonicity assumptions likely to hold in RCTs. To these we add either a (testable) restriction on compliance patterns or an event-study-type restriction on average causal effects. Sequential conditional independence requires that, conditional on treatment and covariate history, subsequent treatment decisions are as good as randomized with respect to future potential outcomes. This seems unattractive in our setting, since decisions to discontinue or modify treatment over time are likely to depend on unmeasured and evolving patient characteristics (a point noted in a related context by \cite{Michael2024_IV_MSM_TimeVarying}).

The identification strategies developed here are also related to work using instrumental variables to identify causal effects in clinical trials. In trials with a static treatment and imperfect compliance, initial treatment assignment is a natural instrument for treatment received \citep{pnas2023}.\footnote{IV tools have long been applied to randomized trials in social science and medicine, though sometimes without reference to IV. For example, \cite{bloom1984} and \cite{newcombe88} derive adjustments for randomized trials with control-group crossovers while \cite{baker94} use maximum likelihood to derive IV-type adjustments for nonadherence for Bernoulli outcomes. \cite{baker2024} characterize this shared intellectual history.} Identification of dynamic causal effects, however, typically requires stronger assumptions than those needed to identify a single static causal effect \citep{hernan_per-protocol_2017,hernan2006instruments}. Our analysis differs in two ways from previous applications of IV to models with dynamic non-compliance. First, we require only a single static instrument (derived from initial random assignment) while earlier work typically relies on a time-varying instrument  \citep[see, e.g.,][]{RobinsHernan2008TimeVaryingExposures,Michael2024_IV_MSM_TimeVarying}. Second, we consider point-in-time treatment effects for outcomes like quality-of-life scores while earlier work focuses on parametric  survival models for absorbing states like death \citep[e.g.,][]{martinussen2017instrumental,shi2021instrumental,ying2023structural,Michiels2024_AdjustingTimeVaryingSwitches,Yende-Zuma2019_AdjustingEffectIntegratingARTTB,RobinsTsiatis1991RPSFTM}.\footnote{Simulation studies comparing causal effect estimators applied to survival data from randomized trials include \citet{Morden2011} and \citet{Latimer2018}. Other frameworks for time-varying effects in randomized trials include the Bayesian principal stratification models developed in \cite{lin2008} and \cite{mattei2024}.} 

In a conceptual framework closely related to ours, \citet{Bowden2025} identifies dynamic treatment effects in the context of a parametric model with a single static instrument and time-varying nonadherence. In contrast with \citet{Bowden2025}, we identify dynamic causal effects without parametric restrictions. Our analysis also differs from \cite{Bowden2025} in allowing for control-group crossovers, a signal feature of ISCHEMIA and many other RCTs with partial compliance.\footnote{Appendix \ref{appendix: bowden} extends the \cite{Bowden2025} parametric model to accommodate control-group crossovers. This appendix details limitations of the \cite{Bowden2025} parametric setup.} Also closely related is \citet{rose2021}, which explores the identifying power of a single instrument when dynamic compliers are moved on the extensive margin only (i.e., from untreated to positive treatment). Our theoretical analysis begins with an analogous assumption in which dynamic compliers are moved from positive to maximal exposure. To our knowledge, this restriction is new to appear in IV analyses of clinical trials.\footnote{\cite{ferman_tecchio_2023} explores the identifying power of assumptions like wave ignorability for a variety of dynamic treatment effect models. Assumptions limiting the dependence of causal effects on calendar time also appear in the biostatistics literature (e.g., \cite{jiangchendingbiomk2023}).}

Beyond ISCHEMIA, the identification strategies developed here are well-suited to applications in which a time-invariant instrument affects time-varying treatment exposure. Examples from economics include studies of dynamic causal effects of fertility on labor market outcomes (e.g., \cite{bronars_aertwins, lundborg2017, lundborg2024}) and time-varying effects of job training programs, as in \cite{bocarhamlalonde2017}. Recent work using in-vitro fertilization as an instrument for childbirth obtains theoretical results related to our Theorem \ref{thm:lambda} \citep{besnesIVF,gallenIVF} while \cite{vohragoldin2024} give partial identification results for the effect of going from the smallest (shortest) to the largest (or longest) treatment exposure. Wave ignorability, the link between cumulative and incremental average causal effects in a potential outcomes model identified by IV, and tools for analyzing marginal potential outcomes seem useful in these and other settings.  

The rest of this paper is organized as follows. The next section introduces the ISCHEMIA trial and describes the evolution of compliance behavior over the life of the trial. Section \ref{sec:metrics} develops new IV identification strategies, following a brief review of IV methods applied to clinical trials with a static Bernoulli treatment. This section also shows how to identify the average characteristics of dynamic compliers.  Section \ref{sec:ests} applies this theoretical framework to ISCHEMIA. Section \ref{sec:concl} summarizes and discusses broader implications.

\section{The ISCHEMIA Trial}

The ISCHEMIA trial randomized 5,179 patients with moderate-to-severe cardiac ischemia to one of two care strategies. Patients assigned to the invasive treatment arm were meant to undergo diagnostic coronary angiography and subsequent revascularization when feasible (through PCI or CABG) while also receiving medical therapy. Conservative-arm patients were meant to receive medical therapy alone, with possible invasive treatment when medical therapy was deemed inadequate \citep{maron2020, spertus2020}. The trial followed patients for up to five years after random assignment.

Our analysis of ISCHEMIA is organized around four scientific goals or questions. The first is to estimate effects of revascularization exposure on SAQ summary scores and angina frequency one to five years after randomization. The second is to determine whether these effects are short-lived or persistent. The third is to compare dynamic IV estimates of cumulative exposure effects with ITT and conventional as-treated/per-protocol analyses. Fourth, we aim to estimate the baseline characteristics of dynamic compliers---the patient group whose revascularization exposure is shifted by random assignment. 

These matters are analyzed using data collected over the five-year life of the trial. Sample sizes decrease over time since participants who enrolled in the trial later contribute fewer observations ahead of the last follow-up date in December 2018.  Roughly 4,300 participants contribute to the analysis of quality of life in wave 1, a number that falls to 670 by wave 5. The proportion with follow-up data is similar in the two arms of the trial. Falling sample sizes over time are due primarily to censoring by trial end date, as can be seen in Supplement Table 2 in \cite{spertus2020}.\footnote{Estimates that control for expected follow-up time are virtually indistinguishable.}  

Because SAQ outcomes are defined only for living trial participants observed at follow-up, the IV estimands detailed in the next section are subject to the usual concerns arising from truncation by patient death. These concerns are mitigated, however, by earlier analyses of ISCHEMIA showing that mortality is unchanged by treatment assignment (e.g., \cite{maron2020, bangalore2020, bangalore2022}). Balanced survival does not guarantee the absence of survival-related selection bias, especially for IV estimands defined by latent compliance groups. Given balanced and high follow-up rates, however, adjustments based on randomized-arm mortality differences would be expected to have little effect on  ITT (and hence IV) estimates. Our estimates can therefore be interpreted as causal effects among participants alive and observed at follow-up. 

ISCHEMIA participants assigned to the invasive arm enjoyed a moderately higher quality of life as a result. Estimated ITT effects on SAQ summary scores and angina frequency scores, plotted in Figure \ref{f:ITTeffects} and reported in Table \ref{t:sample}, show modest treatment-induced gains in each wave. These effects are estimated by differences in outcome means, with covariate adjustment.\footnote{Specifically, ITTs come from ordinary least squares regressions of outcomes on assigned treatment, controlling for baseline outcomes and enrollment regions to boost precision.} Estimated ITT effects on the summary score range from just above 3 in the first wave to 1.7 in wave 4; angina gains initially peak a little higher in the first wave and fall a little lower in wave 3. These estimates can be compared to outcome means of 86-94, with standard deviations of $13$-$15$ (see Table \ref{t:sample}).

The ISCHEMIA trial is characterized by high and variable rates of dynamic noncompliance with experimental treatment assignment, reported in Table \ref{t:sample}.  Revascularization rates in the control group increase from 12\% in wave 1 to 29\% in wave 5. At the same time, only 80\% of those assigned invasive were initially revascularized.  Revascularization rates in the group assigned invasive increased from 80\% in wave 1 to 83\% in wave 5. Importantly, the difference in the likelihood of revascularization by assignment status, reported in the third column of the table, is well below one. Moreover, this gap falls from a high of 68\% in the first wave to 54\% by wave 5. We show below that these high rates of nonadherence and crossover likely dilute the ITT estimates in Figure \ref{f:ITTeffects}. 

Changes in treatment take-up over time induce differences in revascularization exposure, defined as the number of years since revascularization. Figure \ref{f:exposure histogram} charts the distribution of exposure conditional on assignment for each wave. These plots indicate that most invasive-arm patients who were revascularized were revascularized shortly after random assignment: in each wave $w$, most of those assigned to treatment have 0 or $w$ years of exposure. The next section shows how such immediate compliance helps identify average dynamic effects.

\section{IV for Clinical Exposure}\label{sec:metrics}

\subsection{Conceptual Framework}\label{sec:concept}

Consider a setting like ISCHEMIA in which trial participants are randomized to either an invasive treatment like revascularization or a conservative alternative like medical therapy. Let $Z \in \{0,1\}$ denote a dummy variable indicating invasive assignment. Quality of life outcomes are measured in each of $\bar{w}$ annual follow-up waves; $Y_w$ denotes quality of life measured in wave $w \in \{1,2,3,\hdots, \bar{w}\}$. Treatment is assigned some time ahead of wave 1.\footnote{ISCHEMIA collected data every few months in the first year and then semi-annually \citep{spertus2020}. Our analysis is limited to relatively extensive annual follow-ups. \cite{spertus2020} uses a Bayesian framework to estimate high-frequency time-varying exposure effects on an ITT basis.} 

As we've seen, post-assignment revascularization exposure is time-varying: many participants assigned conservative were revascularized later, while some assigned invasive were never revascularized or revascularized years after random assignment. Let $T_w \in \{0,1,2,\hdots, w\}$ denote \emph{revascularization exposure}, defined as the years a patient has lived since revascularization, as seen in wave $w$. Participants in any wave are either never revascularized $(T_w=0)$, revascularized shortly after  assignment $(T_w=w)$, revascularized in the year of observation $(T_w=1)$, or revascularized at some other time in between $(1<T_w<w)$.  

Potential outcomes determine heterogeneous causal effects of revascularization in each wave. Specifically, let $Y_w(t)$ denote a participant's outcome in wave $w$ when they've lived $T_w = t$ years since revascularization. Once a participant is revascularized, they're revascularized forever.  Formally, this means that for all $w\in\{1,...,\Bar{w}-1\}$, $T_w=t$ implies $T_{w+1}=t+1$. This treatment pattern mirrors a staggered adoption setup in event-study models.\footnote{See, e.g., \cite{callaway2021, ATHEY202262, borusyak2021revisiting}.} Measured in wave $w$, the incremental individual causal effect of revascularization exposure is $Y_w(t)-Y_w(t-1)$. This notation encompasses the difference in quality of life a patient experiences when revascularized this year versus last year. 
The cumulative effect of revascularization for $t \le w$ years of exposure observed in wave $w$ is $Y_w(t)-Y_w(0)$. The cumulative effect of revascularization exposure for a participant revascularized shortly after random assignment is $Y_w(w)-Y_w(0)$.

Just as potential outcomes are indexed against treatment received, it's useful to index potential treatment against treatment assigned. To that end,
let $T_w(z)$ denote revascularization exposure in wave $w$ when $Z=z$. This is defined for all trial participants, regardless of assignment. The effect of invasive assignment on revascularization exposure in wave $w$ is therefore $T_w(1)-T_w(0)$. 

Given an exclusion restriction, random assignment makes $Z$ independent of potential outcomes and potential treatments.  This \emph{independence assumption} is formalized as: 
\begin{assumption}[Independence]\label{assump:independence}
For each wave $w\in\{1,...,\Bar{w}\}$, the random variables $Y_w(0),...,Y_w(w),T_w(1), T_w(0)$ are jointly independent of $Z$.
\end{assumption}
\noindent The exclusion restriction in this case asserts that assignment to the invasive arm affects outcomes solely by increasing the likelihood of revascularization.\footnote{Exclusion is formalized by double indexing potential outcomes as in \cite{AIR}. Let $Y_w(t,z)$ denote a participant's wave-$w$ potential outcome given $t$ years of exposure and assignment $z$. The exclusion restriction says that $Y_w(t,z)= Y_w(t,z^\prime)$  for each $t \le w$, $z$, and $z^\prime\neq z$.} This assumption is plausible in the ISCHEMIA trial, since randomization to the invasive treatment likely had no direct effects on outcomes. Importantly, the exclusion restriction allows assignment to the invasive treatment to affect outcomes via the \emph{timing} of revascularization.  

As in \cite{late94} and \cite{angrist_imbens95}, we assume that invasive assignment either induces revascularization, advances the date of revascularization, or leaves revascularization exposure unchanged. This is formalized as:
\begin{assumption}[Monotonicity]\label{assump:exposure_monotonicity}
    For each $w\in\{1,...,\Bar{w}\}$, $T_w(1) \ge T_w(0)$ almost surely.
\end{assumption}
\noindent Given monotonicity, trial participants can be classified in each wave as \emph{compliers} (with $T_w(1)>T_w(0)$), \emph{always-takers} (with $T_w(1)=T_w(0)>0$), or \emph{never-takers} (with $T_w(1)=T_w(0)=0$). For always- and never-takers, invasive assignment leaves revascularization exposure unchanged. For compliers, invasive assignment increases exposure.\footnote{Appendix \ref{appendix:vytlacil} shows that Assumption \ref{assump:exposure_monotonicity} is equivalent to assuming revascularization decisions arise from a nonparametric latent-index model.}

Invasive assignment is also assumed to generate a \emph{first stage} in wave 1, making the instrument relevant for treatment received shortly after assignment. This restriction is: 
\begin{assumption}[Relevance]\label{assump:relevance}
    $0<P[Z=1]<1$ and $E[T_1\mid Z=1]>E[T_1\mid Z=0]$.
\end{assumption}
\noindent Under Assumption \ref{assump:independence}, instrument relevance implies a positive wave-1 first-stage effect on treatment:
$$E[T_1\mid Z=1]-E[T_1\mid Z=0]=E[T_1(1)-T_1(0)] > 0.$$ 
Because wave-$1$ compliers have $T_1(1)>T_1(0)$, the wave-1 first-stage is the probability of wave-1 compliance under Assumption \ref{assump:exposure_monotonicity}. Moreover, because treatment is irreversible, Assumption \ref{assump:relevance} implies the existence of compliers such that $T_w(1)= w>T_w(0)$ in each wave. In other words, a subset of compliers in each wave, $w$, has $w$ years of revascularization exposure when assigned invasive.   

Under Assumptions \ref{assump:independence}-\ref{assump:relevance}, a simple Wald-type IV estimand using wave-1 data identifies the average causal effect of one year of revascularization exposure for wave-1 compliers. Specifically, in wave 1, revascularization exposure, $T_1$, is a Bernoulli treatment that indicates participants revascularized shortly after random assignment. The \cite{late94} local average treatment effect (LATE) theorem applied to wave-1 data implies that:
\begin{equation}
\frac{E[Y_1\mid Z=1]-E[Y_1\mid Z=0]}{E[T_1\mid Z=1]-E[T_1\mid Z=0]}=E[Y_1(1)-Y_1(0)\mid T_1(1)>T_1(0)].\label{eq:late}
\end{equation}
The causal parameter on the right-hand side of this expression is the average effect of revascularization exposure, $Y_1(1)-Y_1(0)$, on wave-1 compliers. This is obtained by dividing the wave-1 \emph{reduced form} by the wave-1 first stage.  The former is defined as the difference in average quality of life between participants assigned invasive and participants assigned conservative, measured in the first wave. The first stage in the Wald denominator is given by the corresponding difference in wave-1 revascularization rates. 

The reduced form is an ITT effect that compares outcomes by treatment assigned, that is, by the instrument, $Z$. The first stage is the difference in treatment rates by assignment status; this is also the share of the trial population for whom treatment status is changed by random assignment.  Since this share is between zero and one, ITT effects in a static or short-run analysis of trial data are diluted relative to the effect of revascularization itself. Intuitively, because revascularization assignment is assumed to affect outcomes solely by inducing revascularization among compliers, ITT is diluted by averaging in causal effects of zero for always- and never-takers. 

Conventional as-treated analyses of clinical trials ignore random assignment. Rather, as-treated analyses compare outcomes conditional on treatment received, typically with covariate controls (see, for instance, the \cite{cabanaPP} analysis of the CABANA trial comparing treatments for atrial fibrillation). Our as-treated estimates come from ordinary least squares (OLS) regressions of $Y_w$ on $T_w$, with a few baseline controls. Non-random treatment take-up likely biases such estimates; in the LATE framework, this bias can be understood as stemming from differences in health between compliers, always-takers, and never-takers \citep{angrist2004,aronow2013}. We substantiate this view with a comparison of group characteristics in Section \ref{sec:popchar} below.

\subsection{IV for Immediate Exposure Effects}

In randomized with repeated follow-ups, random assignment changes the distribution of time-varying exposure. Monotonicity requires only that exposure be weakly increasing with invasive assignment. As Figure \ref{f:exposure histogram} suggests, however, most ISCHEMIA participants assigned invasive were revascularized in the first year following assignment, with few induced by random assignment to revascularize in later years. Participants assigned conservative, by contrast, flow into revascularization over the life of the trial. This empirical observation motivates an identifying assumption requiring compliers who are assigned invasive to be revascularized shortly after random assignment. This \emph{immediate compliers only} (IMCO) assumption is formalized as:
\begin{assumption}[IMCO]\label{assump:imco}
    Assignment either shifts participants into treatment immediately or has no effect: in each wave $w\in\{1,...,\Bar{w}\}$, $P[T_w(1) = w \mid T_w(1) > T_w(0)] = 1$. 
\end{assumption}

Assumption \ref{assump:imco} says that everyone for whom revascularization exposure is changed by random assignment is revascularized in the first wave. The following result shows that this condition is sufficient to identify LATE averaged over immediate complier types. In particular, instrumenting $T_1$ with $Z$ in wave $w$ identifies a causal effect averaged over all immediate compliers seen in wave $w$. 
\begin{theorem}[Immediate-Exposure IV Under IMCO]\label{th:causaleffectIMCO}
      Suppose Assumptions \ref{assump:independence}-\ref{assump:relevance} and IMCO hold. Consider the IV estimator that, for a given $w$, uses $Z$ to instrument $T_1$ in 
\begin{equation}
\label{eq:initial-esxposure_linearmodel}
Y_w= \gamma_w + \tau_w T_1 + u_w,
\end{equation}
where $\gamma_w$ and $\tau_w$ are parameters and the IV estimand is defined by $E[u_w|Z]=0$. Then, coefficient $\tau_w$ in this IV estimand equals
\begin{equation}\label{eq:tau_IMCO} 
    \tau_w = \sum_{t=0}^{w -1}\omega_{wt} E[Y_w(w) - Y_w(t) \mid T_w(1)=w, T_w(0)=t],
    \end{equation}
where $\omega_{wt}\equiv P[T_w(1)=w, T_w(0)=t \mid T_1(1)>T_1(0)]$ and $\sum_{t=0}^{w-1}\omega_{wt}=1$. Moreover, for each $t< w$, $\omega_{wt}$ is identified by:
\begin{equation}\label{eq:omega}
    \omega_{wt} = -\frac{E[\mathbf 1[T_w=t] \mid Z=1] - E[\mathbf 1[T_w=t] \mid Z=0]}{E[T_1 \mid Z=1] -E[T_1\mid Z=0]}.
\end{equation}
This is an IV estimand using $Z$ to instrument $T_1$ with dependent variable $-\mathbf 1[T_w=t]$.
\end{theorem}
\begin{proof}
    See Appendix \ref{appendix:causalfxIMCO_proof}. 
\end{proof}

Estimand $\tau_w$ in Theorem \ref{th:causaleffectIMCO} averages effects of increasing exposure from $t$ to $w$ for each $t<w$. For instance, instrumenting $T_1$ in wave 3 averages cumulative effects from $0\rightarrow3,1\rightarrow3$, and $2\rightarrow3$. We refer to $\tau_w$ as an \emph{immediate exposure effect} because $t=w$ in any wave $w$ implies exposure in wave $1$.\footnote{Theorem \ref{th:causaleffectIMCO} is implied by \cite{newkappa2025} Theorem 2.3. Appendix A.1 gives a direct proof.} The immediate exposure estimand in \eqref{eq:tau_IMCO} provides a useful summary measure of immediate revascularization. This estimand does not, however, reveal how revascularization effects evolve with increasing exposure. We turn next to estimands that identify these dynamic effects.\footnote{IMCO is similar in spirit to the \cite{rose2021} restriction allowing extensive margin compliers only (EMCO). With an ordered treatment under EMCO, everyone whose exposure is increased by assignment moves from $T_w(0)=0$ to positive $T_w(1)$.} 

\subsection{Identification and Estimation of Dynamic Causal Effects}\label{sec:dynamicfx}

Dynamic exposure effects are identified with the aid of an assumption that limits systematic heterogeneity in treatment effects across waves:
\begin{assumption}[Wave Ignorability]\label{assump:wave_ignore}
    Average incremental effects for compliers are common across waves: for each pair of waves $w$ and $v\le w$, and for each exposure time $t\le v$,
 \begin{equation}\label{eq:definelambda}
        E[Y_w(t)- Y_w(t-1) \mid T_w(1) \ge t > T_w(0)] = E[Y_{v}(t)- Y_{v}(t-1) \mid T_{v}(1) \ge t > T_{v}(0)] \equiv \lambda_t.
    \end{equation}
\end{assumption}

This assumption restricts and defines incremental exposure effects for compliers, denoted $\lambda_t$. In particular, wave ignorability allows $\lambda_t$ to vary freely with exposure time while requiring incremental effects to be independent of the wave in which they're seen. For instance, in wave $w$, the initial incremental effect of revascularization, $Y_w(1)-Y_w(0)$, may differ from the incremental effect of one year of early exposure, $Y_w(w)-Y_w(w-1)$. Assumption \ref{assump:wave_ignore} also allows effects to differ for compliers, always-takers, and never-takers in each wave. Wave ignorability mirrors restrictions implicit in conventional event study regression models, which typically index dynamic causal effects by event time rather than calendar time or treatment cohort (see, e.g., \cite{miller2023} for a survey). 

The following theorem shows that, in the context of the LATE framework, wave ignorability is sufficient to identify the set of incremental causal effects, $\lambda_t$.   
\begin{theorem}[Dynamic Incremental Causal Effects]\label{thm:lambda}
    For each $t\in\{1,...,\bar{w}\}$, let $D_{wt}\equiv\mathbf{1}[T_w\ge t]$ indicate exposure of at least $t$ periods as of wave $w$. Stack the data across waves and consider the two-stage least squares (2SLS) estimator that uses $Z$ and $Z\times\mathbf{1}[j=w]$ with $j\in\{2,...,\bar{w}\}$ to instrument the set of $D_{wt}$ in the linear model
    \begin{equation}\label{eq:lambda_IV}
        Y_w = \mu + \sum_{j=2}^{\bar{w}}\alpha_j\mathbf{1}[j=w] + \sum_{t=1}^{\bar{w}} \lambda_t D_{wt} + \eta_w,
    \end{equation}
   where $\mu$ and $\alpha_j$ are constants and the corresponding 2SLS estimand is defined by $E[\eta_w\mid Z]=0$ for all $w$. Given Assumptions \ref{assump:independence}-\ref{assump:relevance} and \ref{assump:wave_ignore}, the vector of $\lambda_t$ defined in expression \eqref{eq:definelambda} equals the set of $\lambda_t$ in equation \eqref{eq:lambda_IV}.
\end{theorem}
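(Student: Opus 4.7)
My plan is to translate the 2SLS estimand into its population moment conditions and reduce them to a triangular linear system in $\lambda_1,\ldots,\lambda_{\bar{w}}$ whose unique solution coincides with the incremental effects in \eqref{eq:definelambda}. Because the stacked 2SLS is exactly identified ($\bar{w}$ wave intercepts and $\bar{w}$ slopes on $D_{wt}$, instrumented by $\bar{w}$ wave dummies and $\bar{w}$ functions of $Z$), the condition $E[\eta_w\mid Z]=0$ evaluated at each wave $w$ and each $z\in\{0,1\}$ delivers exactly $2\bar{w}$ equations in $2\bar{w}$ unknowns.

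The first step is to difference across $z$. For any wave $w$ the intercepts $\mu+\alpha_w$ cancel, yielding
$$E[Y_w\mid Z=1]-E[Y_w\mid Z=0] \;=\; \sum_{t=1}^{\bar{w}}\lambda_t\bigl(E[D_{wt}\mid Z=1]-E[D_{wt}\mid Z=0]\bigr).$$
Since $T_w\le w$ forces $D_{wt}=0$ for $t>w$, the sum truncates at $t=w$. Under independence (Assumption \ref{assump:independence}) and monotonicity (Assumption \ref{assump:exposure_monotonicity}), each first-stage difference equals the wave-$w$ complier share at threshold $t$, $\pi_{wt}\equiv P(T_w(1)\ge t>T_w(0))$.

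Next I would unpack the reduced form. Independence gives $E[Y_w\mid Z=z]=E[Y_w(T_w(z))]$, and monotonicity validates the telescoping identity
$$Y_w(T_w(1))-Y_w(T_w(0))\;=\;\sum_{s=1}^{w}\bigl[Y_w(s)-Y_w(s-1)\bigr]\mathbf{1}[T_w(1)\ge s>T_w(0)].$$
Taking expectations, conditioning on each complier event $\{T_w(1)\ge s>T_w(0)\}$, and invoking wave ignorability (Assumption \ref{assump:wave_ignore}) to replace each conditional expectation by $\lambda_s$ yields
$$E[Y_w\mid Z=1]-E[Y_w\mid Z=0]\;=\;\sum_{s=1}^{w}\lambda_s\,\pi_{ws}.$$

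Equating the two expressions shows that the 2SLS slopes $\lambda_t^{\text{2SLS}}$ must satisfy the lower-triangular system $\sum_{s=1}^{w}\lambda_s^{\text{2SLS}}\pi_{ws}=\sum_{s=1}^{w}\lambda_s\pi_{ws}$ for $w=1,\ldots,\bar{w}$, whose diagonal entries are $\pi_{ww}$. These diagonals are strictly positive by Assumption \ref{assump:relevance} together with treatment irreversibility (as noted in the discussion after Assumption \ref{assump:relevance}), so the system admits a unique solution and $\lambda_t^{\text{2SLS}}=\lambda_t$ for each $t$; the remaining $\bar{w}$ moment conditions then pin down $\mu$ and the $\alpha_t$. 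The step I expect to require the most care is the telescoping decomposition: I need monotonicity to justify the non-negative indicators in the right window, and I need wave ignorability to line up precisely with the indicator events $\{T_w(1)\ge s>T_w(0)\}$ indexed by the \emph{incremental threshold} $s$ rather than by the wave $w$ or the cumulative exposure $T_w(1)$.
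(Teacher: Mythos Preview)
Your proposal is correct and follows essentially the same route as the paper's proof: both reduce the wave-by-wave reduced forms to $\rho_w=\sum_{t=1}^{w}\lambda_t\pi_{wt}$ via the \cite{angrist_imbens95} ACR decomposition (your telescoping step), then invert the lower-triangular first-stage matrix $\Pi$ whose diagonal entries $\pi_{ww}=\pi_{11}>0$ under relevance and irreversibility. The only difference is presentational---you spell out the 2SLS moment conditions and the telescoping identity explicitly, whereas the paper cites the ACR theorem and asserts that the stacked 2SLS with wave-interacted instruments reproduces the wave-by-wave system.
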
     
\begin{proof}
See Appendix \ref{appendix:lambda_proof}.
\end{proof}
      
\noindent Theorem \ref{thm:lambda} is proved using the \citet{angrist_imbens95} average causal response (ACR) theorem. The ACR theorem shows that the wave-specific IV estimand using $Z$ to instrument $T_w$ with outcome $Y_{w}$ can be written as a weighted average of incremental complier revascularization effects for each period up to $w$. Under wave ignorability, these incremental effects are common across $w$ and can therefore be obtained by solving a system of linear equations linking wave-specific reduced forms and first stages. The 2SLS estimator described in Theorem \ref{thm:lambda} implements this solution.\footnote{The ACR theorem identifies a weighted average of incremental causal effects for partially overlapping groups of compliers, while Theorem \ref{thm:lambda} identifies average incremental effects for a fixed subpopulation of compliers immediately induced into treatment. To see this, note that for any $t\le w$, wave ignorability and irreversible treatment imply
$\lambda_t=E[Y_w(t)- Y_w(t-1) \mid T_w(1) \ge t > T_w(0)] = E[Y_t(t)- Y_t(t-1) \mid T_t(1) = t > T_t(0)]=E[Y_t(t)- Y_t(t-1) \mid T_1(1)>T_1(0)]$. The second equality here is a consequence of wave ignorability and the fact that $T_t(1) \geq t$ if and only if $T_t(1)=t$; the third equality comes from the fact that, given irreversibility, $T_t(1)=t$ if and only if $T_1(1)=1$.}

Beyond average incremental effects of an additional year of exposure, it's useful to know the average causal effect of each level of exposure relative to a common reference outcome with no exposure---for ISCHEMIA, this means no revascularization. These cumulative exposure effects can be obtained under a somewhat stronger version of Assumption \ref{assump:wave_ignore}:
\begin{assumption}[Strong Wave Ignorability]\label{assump:strong_wave_ignore}
    Average incremental effects for compliers at a given level of exposure are common across waves for compliers at any possible margin: for each $t,v,w,v^\prime,$ and $t^\prime$ such that $t\le v\le w$ and $t\le t^\prime\le v^\prime\le w$, 
    \begin{equation*}
        E[Y_w(t)- Y_w(t-1) \mid T_w(1) \ge t' > T_w(0)] = E[Y_{v}(t)- Y_{v}(t-1) \mid T_{v'}(1) \ge t' > T_{v'}(0)] \equiv \lambda_t.
    \end{equation*}
\end{assumption}
 
Assumption \ref{assump:strong_wave_ignore} implies  Assumption \ref{assump:wave_ignore} (to see this, set $v'=v$ and $t'=t$ in Assumption \ref{assump:strong_wave_ignore}). In general, however, Assumption \ref{assump:strong_wave_ignore} adds to Assumption \ref{assump:wave_ignore} the requirement that a given incremental causal effect be the same at each follow-up wave for complier groups defined over every relevant level of exposure (indexed by $t'$), not just the level at which the incremental effect in question is measured (indexed by $t$). Like Assumption \ref{assump:wave_ignore}, however, Assumption \ref{assump:strong_wave_ignore} asserts that the period or wave in which effects are measured is irrelevant for incremental causal effects at a given level of exposure.      

Under strong wave ignorability, average cumulative causal effects are identified and equal to a 2SLS estimand described in the following theorem.  
\begin{theorem}[Dynamic Cumulative Causal Effects]\label{corollary:Lambda}
Under Assumptions \ref{assump:independence}-\ref{assump:relevance} and \ref{assump:strong_wave_ignore}:
\begin{itemize} 
\item[i.] For all $w$ and $t\le w$, dynamic cumulative causal effects are given by
 \begin{equation}
    \label{eq:corrol_def_Lambda}
       E[Y_w(t)- Y_w(0) \mid T_w(1) \ge t > T_w(0)]=\sum_{i=1}^t \lambda_i\equiv \Lambda_t.
    \end{equation} 
\item[ii.] For each $t\in\{1,...,\bar{w}\}$, let $R_{wt}\equiv\mathbf{1}[T_w=t]$. Stack the data across waves and consider a 2SLS estimator that uses $Z$ and $Z\times\mathbf{1}[j=w]$ with $ j\in\{2,...,\bar{w}\}$ to instrument the set of $R_{wt}$ in the linear model
    \begin{align}\label{eq:Lambda_IV}
    Y_w = \phi + \sum_{j=2}^{\bar{w}}\delta_j\mathbf{1}[j=w] + \sum_{t=1}^{\bar{w}} \Lambda_t R_{wt} + \varepsilon_w,
    \end{align}
    where $\phi$ and $\delta_j$ are constants and the 2SLS estimand is defined by $E[\varepsilon_w\mid Z]=0$ for all $w$. Then the vector of $\Lambda_t$ defined in \eqref{eq:corrol_def_Lambda} equals the set of $\Lambda_t$ in equation \eqref{eq:Lambda_IV}.
\end{itemize}
\end{theorem}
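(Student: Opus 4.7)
My plan is to handle the two parts separately: part (i) follows from a telescoping decomposition of the cumulative effect into incremental effects together with a direct application of Assumption \ref{assump:strong_wave_ignore}; part (ii) then follows by observing that regression \eqref{eq:Lambda_IV} is an invertible reparameterization of regression \eqref{eq:lambda_IV} from Theorem \ref{thm:lambda}, so its 2SLS estimand is just the image of the previous estimand under a known linear map.

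For part (i), I would write $Y_w(t) - Y_w(0) = \sum_{i=1}^{t}\bigl[Y_w(i)-Y_w(i-1)\bigr]$ and take expectations conditional on the complier set $\{T_w(1)\ge t>T_w(0)\}$ to obtain
\begin{equation*}
E[Y_w(t)-Y_w(0)\mid T_w(1)\ge t>T_w(0)] = \sum_{i=1}^{t} E[Y_w(i)-Y_w(i-1)\mid T_w(1)\ge t>T_w(0)].
\end{equation*}
Applying Assumption \ref{assump:strong_wave_ignore} with the assumption's ``$t$'' set to $i$, ``$t'$'' set to $t$, and $v=v'=w$ (the required inequalities $i\le w$ and $i\le t\le w$ both hold), each summand equals $\lambda_i$, so the sum equals $\Lambda_t$. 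This is precisely where Assumption \ref{assump:strong_wave_ignore} provides something beyond Assumption \ref{assump:wave_ignore}: the incremental effect at step $i$ must be averaged over the $t$-complier group for each $i\le t$, which is exactly the content of allowing $t'\ne t$.

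For part (ii), my key observation is the within-wave algebraic identity
\begin{equation*}
\sum_{t=1}^{\bar{w}} \lambda_t D_{wt} = \sum_{s=1}^{\bar{w}} \Lambda_s R_{ws},
\end{equation*}
which follows from $D_{wt} = \sum_{s\ge t} R_{ws}$, swapping the order of summation, and using $R_{ws}=0$ for $s>w$. This shows that \eqref{eq:Lambda_IV} is just \eqref{eq:lambda_IV} rewritten with $\phi=\mu$, $\delta_t=\alpha_t$, $\varepsilon_w=\eta_w$, and $\Lambda_t=\sum_{i=1}^{t}\lambda_i$. Within each wave the vectors $(D_{wt})_{t=1}^{\bar{w}}$ and $(R_{wt})_{t=1}^{\bar{w}}$ are related by an invertible upper-triangular transformation that fixes the intercept and wave dummies, and both specifications share the same just-identified instrument set $\{Z, Z\times\mathbf{1}[w=t]\}_{t=2}^{\bar{w}}$. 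The 2SLS estimand in \eqref{eq:Lambda_IV} is therefore the image of the 2SLS estimand in \eqref{eq:lambda_IV} under this linear map. By Theorem \ref{thm:lambda}, which applies because Assumption \ref{assump:strong_wave_ignore} implies Assumption \ref{assump:wave_ignore}, the latter equals $(\lambda_t)$, so the former equals $(\Lambda_t)$, which by part (i) is the vector of cumulative causal effects.

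The main obstacle I anticipate is the bookkeeping in part (i): the telescoping requires the incremental effect at step $i$ averaged over the margin-$t$ complier group rather than the margin-$i$ complier group that Assumption \ref{assump:wave_ignore} would equate, so the extra flexibility of Assumption \ref{assump:strong_wave_ignore} must be explicitly invoked at each $i<t$. Part (ii) is then essentially linear algebra atop Theorem \ref{thm:lambda}, with the only care needed being the invertibility of the $D$-to-$R$ block and the uniqueness of the just-identified 2SLS estimand.
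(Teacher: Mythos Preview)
Your proposal is correct. For part (ii) it coincides with the paper's argument: both observe that \eqref{eq:Lambda_IV} is an invertible linear reparameterization of \eqref{eq:lambda_IV} (you via $D_{wt}=\sum_{s\ge t}R_{ws}$ and the swap of sums), so the 2SLS estimand transforms as $\Lambda_t=\sum_{i\le t}\lambda_i$. For part (i) your route is more direct than the paper's. You apply Assumption \ref{assump:strong_wave_ignore} once, reading the left-hand side with its $(t,t',w)$ set to $(i,t,w)$, to obtain $E[Y_w(i)-Y_w(i-1)\mid T_w(1)\ge t>T_w(0)]=\lambda_i$ immediately and then telescope. The paper instead takes a detour: it first uses the absorbing-treatment fact that the immediate-complier events $\{T_i(1)\ge i>T_i(0)\}$ and $\{T_t(1)\ge t>T_t(0)\}$ coincide, and then applies Assumption \ref{assump:strong_wave_ignore} twice to move the wave index from $i$ (or $t$) back to $w$ on each side. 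Your shortcut works precisely because Assumption \ref{assump:strong_wave_ignore} already names the common value $\lambda_t$ across \emph{all} admissible margins $t'$ and waves; the paper's intermediate step makes the role of absorbing treatment explicit but is not logically needed once Assumption \ref{assump:strong_wave_ignore} is in hand.
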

\begin{proof}
See Appendix \ref{appendix:Lambda_proof}.
\end{proof}

The first part of Theorem \ref{corollary:Lambda} says that cumulative revascularization effects are identified by the sum of incremental revascularization effects. Strong wave ignorability is key to this result: each $\lambda_t$ captures incremental causal effects for different complier groups. A scenario where trial participants have 0-2 years of exposure highlights the role played by Assumption \ref{assump:strong_wave_ignore} in this context. In this case, the relevant incremental effects in wave 2 are $\lambda_1 = E[Y_{2}(1) - Y_{2}(0) \mid T_{2}(1) \ge 1 > T_{2}(0)]$ and $\lambda_{2} = E[Y_{2}(2) - Y_{2}(1) \mid T_{2}(1) \ge 2 > T_{2}(0)]$, which average over different complier groups. The appendix proof shows that under strong wave ignorability, effects for these groups are the same at a given increment: $E[Y_{2}(1) - Y_{2}(0) \mid T_{2}(1) \ge 1 > T_{2}(0)]=E[Y_{2}(1) - Y_{2}(0) \mid T_{2}(1) \ge 2 > T_{2}(0)]$. Hence, $\lambda_1+\lambda_2=\Lambda_2$.\footnote{\label{fn:bloom}Equality of incremental causal effects across complier groups simplifies $\Lambda_t$ in settings with no control-group crossovers in the first wave. By setting $v=v^\prime=t^\prime=w$ in Assumption \ref{assump:strong_wave_ignore} we have $\lambda_t=E[Y_w(t)-Y_w(t-1)\mid T_w(1)\ge w > T_w(0)]=E[Y_w(t)-Y_w(t-1)\mid T_1(1)> T_1(0)]$ since treatment is irreversible; hence $\Lambda_t=\sum_{i=1}^t\lambda_t=E[Y_w(t)-Y_w(0)\mid T_1(1)>T_1(0)]$. This shows that, when $P[T_1(0)=1]=0$, cumulative effect $\Lambda_t$ equals the average effect of $t$ periods of exposure on the immediately-treated.} The second part of Theorem \ref{corollary:Lambda} shows that this summation is implemented by instrumenting indicators for the full set of exposure levels, $R_{wt}$.

\subsection{Characterizing Dynamic Compliers and Always-Takers}\label{sec:character}

Are IV estimates of causal effects clinically relevant? This question is answered in part by describing the baseline health (and other characteristics) of different sorts of compliers. In particular, complier characteristics inform assessments of the external validity of IV estimates. IV estimates are highly relevant when the associated complier characteristics are similar to those of a population of interest, typically the study population in an RCT. Moreover, because the treated population includes always-takers, the contrast between complier and always-taker characteristics illuminates the biases we should expect in as-treated per-protocol estimates. 

In a dynamic setting, covariate means for compliers and always-takers are identified by the following theorem, which is implied by Theorem \ref{thm:late_takers_general} in Appendix \ref{appendix:always_taker_mean_general}. The appendix theorem covers mean potential outcomes as well as means for never-takers.

\begin{theorem}[Dynamic Characterizations]\label{thm:late_takers}
    Let $X$ denote a baseline covariate, fixed across waves and unchanged by assignment. Suppose that $(X, T_w(1),T_w(0))$ is independent of $Z$ for all $w\in\{1,...,\Bar{w}\}$ and that Assumptions \ref{assump:exposure_monotonicity}-\ref{assump:relevance} hold. Then:
\begin{itemize} 
    \item[i.] \emph{Immediate complier means} are given by
    \begin{align}
    \label{eq:immediate_complier_X}
    E[X \mid T_1(1)>T_1(0)] 
    =\frac{E[\mathbf{1}[T_1=1]\times X\mid Z=1] - E[\mathbf{1}[T_1=1]\times X\mid Z=0]}{E[\mathbf{1}[T_1=1]\mid Z=1] - E[\mathbf{1}[T_1=1]\mid Z=0]}.
    \end{align}
    \item[ii.] \emph{Immediate always-taker means} are given by
    \begin{equation}
    \label{eq:immediate_AT_X}
    E[X \mid T_1(1)=T_1(0)=1]=E[ X \mid T_1=1, Z=0].
    \end{equation}
\end{itemize}
     \noindent Moreover, if Assumption \ref{assump:imco} also holds, we have that:
     \begin{itemize}
     \setcounter{enumi}{2}
     \item[iii.] Complier means are constant across waves and equal to immediate complier means:
    \begin{equation}
    \label{eq:latercomplier_ag_X}
    E[X \mid T_w(1)>T_w(0)]=E[X\mid T_1(1)>T_1(0)]
    \end{equation}
    for all $w\in\{1,...,\Bar{w}\}$;
    \item[iv.] For each $w>1$ and $t\in\{0,\dots,w-1\}$ such that
    $$E[\mathbf{1}[T_w=t]\mid Z=1]-E[\mathbf{1}[T_w=t]\mid Z=0]\neq 0, $$ \emph{disaggregated complier means} are given by
    \begin{equation}
    \label{eq:latercomplier_dsg_X}
    E[X \mid T_w(1)=w, T_w(0)=t]=
    \frac{E[\mathbf{1}[T_w=t]\times X\mid Z=1]-E[\mathbf{1}[T_w=t]\times X\mid Z=0]}{E[\mathbf{1}[T_w=t]\mid Z=1]-E[\mathbf{1}[T_w=t]\mid Z=0]};
    \end{equation}
    \item[v.] For each $w>1$, \emph{later always-taker means} are given by
    \begin{equation}
    \label{eq:laterATs_X}
    E[X\mid w>T_w(1)=T_w(0)\ge 1]
    = E[X\mid w>T_w\ge1, Z=1].
    \end{equation}
   Finally, \emph{marginal always-taker means}, which average immediate and later always-takers, can be obtained using
    \begin{equation}
    \label{eq:AT_X}
    \begin{split}
   & E[X\mid T_w(1)=T_w(0)\ge 1] \\
    &= \pi_w E[ X \mid T_1=1, Z=0]
    +
    (1-\pi_w)E[X\mid w>T_w\ge1, Z=1],
    \end{split}
    \end{equation}
    where $\pi_w=E[\mathbf{1}[T_w=w]\mid Z=0]/ \{E[\mathbf{1}[T_w=w]\mid Z=0] + E[\mathbf{1}[1\le T_w<w]\mid Z=1]\}$ is the share of immediate always-takers among all always-takers.
\end{itemize}
\end{theorem}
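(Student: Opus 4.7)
My plan is to reduce each identity to an application of independence at the level of potential treatments, combined with monotonicity---and, for the later-wave claims, IMCO---to classify units into latent types. Throughout I rely on the substitution $E[g(T_w) X \mid Z=z] = E[g(T_w(z)) X]$ for any bounded $g$, which follows from the joint independence of $(X, T_w(1), T_w(0))$ and $Z$. For part (i), this turns the numerator into $E[(\mathbf{1}[T_1(1)=1] - \mathbf{1}[T_1(0) = 1]) X]$; since the wave-1 treatment is binary and $T_1(1) \ge T_1(0)$, the difference of indicators collapses to $\mathbf{1}[T_1(1) > T_1(0)]$, and the ratio to the analogous denominator delivers the immediate complier mean. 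Part (ii) is then immediate: conditioning on $Z=0$ and $T_1=1$ pins down $T_1(0) = 1$, which under monotonicity forces $T_1(1) = 1$, and independence removes the conditioning on $Z$.

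The set equality $\{T_w(1) > T_w(0)\} = \{T_1(1) > T_1(0)\}$ (under IMCO) is the main substantive step, since it underlies both the wave-constancy of complier means and the identification arguments in parts (iii) and (iv). I view this as the main obstacle because it requires tracking IMCO and irreversibility at two different waves simultaneously. For the forward direction I would argue that $T_w(1) > T_w(0)$ triggers IMCO, giving $T_w(1) = w$ and hence $T_1(1) = 1$ by irreversibility; if $T_1(0) = 1$ also held, irreversibility under $Z=0$ would push $T_w(0)$ to $w$, contradicting strict inequality, so $T_1(0) = 0$. The converse uses irreversibility in reverse: $T_1(1) = 1, T_1(0) = 0$ propagate to $T_w(1) = w$ and $T_w(0) \le w-1$, so $T_w(1) > T_w(0)$. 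Equality of sets then gives equality of conditional means of $X$.

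For part (iii) with $t < w$, IMCO implies that $T_w(1) = t$ can only occur among units with $T_w(1) = T_w(0) = t$, so $\mathbf{1}[T_w(1) = t] = \mathbf{1}[T_w(1) = T_w(0) = t]$; monotonicity and IMCO jointly split $\mathbf{1}[T_w(0) = t] = \mathbf{1}[T_w(1) = T_w(0) = t] + \mathbf{1}[T_w(1) = w, T_w(0) = t]$. Subtracting and applying independence, the always-taker terms cancel and the numerator equals $-E[\mathbf{1}[T_w(1) = w, T_w(0) = t] X]$, with analogous denominator $-P(T_w(1) = w, T_w(0) = t)$, so the ratio is the disaggregated complier mean. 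For part (iv), the same reasoning shows $\{Z=1, 1 \le T_w < w\} = \{Z=1, 1 \le T_w(1) = T_w(0) < w\}$, and independence delivers the later always-taker mean. The marginal formula then comes from the law of total probability applied to the partition of $\{T_w(1) = T_w(0) \ge 1\}$ into immediate ($T_w(1) = T_w(0) = w$) and later components, together with the irreversibility identity $\{T_w(1) = T_w(0) = w\} = \{T_1(1) = T_1(0) = 1\}$ that rewrites the mass of immediate always-takers as $P(T_w = w \mid Z = 0)$ and (via IMCO) the mass of later always-takers as $P(1 \le T_w < w \mid Z = 1)$; these are exactly the ingredients in $\pi_w$.
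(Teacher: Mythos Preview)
Your proposal is correct and follows essentially the same route as the paper's proof: independence reduces observed moments to expectations over potential treatments, monotonicity (plus IMCO in later waves) partitions units into latent types so that always-taker contributions cancel in the differences, and the set equality $\{T_w(1)>T_w(0)\}=\{T_1(1)>T_1(0)\}$ is established via irreversibility in both directions and IMCO for the forward implication. Your exposition is slightly more explicit in places (e.g., tracking the sign cancellation in part~(iii) and spelling out why $T_1(0)=1$ would contradict $T_w(1)>T_w(0)$), but the argument is the same.
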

\begin{proof}
    See Appendix  \ref{appendix:always_taker_mean_general}.
\end{proof}

Since identification of average baseline characteristics for immediate compliers and immediate always-takers is analogous to identification of characteristics in a static IV setup, parts (i) and (ii) of Theorem \ref{thm:late_takers} follow from \cite{imbensrubin97} and \cite{abadie2003}.  The rest of the theorem uses IMCO (Assumption \ref{assump:imco}) to identify complier and always-takers means in a dynamic framework. Specifically, equation \eqref{eq:latercomplier_dsg_X} disaggregates compliers based on the exposure level they  attain when assigned conservative.  These complier means can be computed using an IV estimand that takes $\mathbf{1}[T_w=t]\times X$ as the outcome while instrumenting $\mathbf{1}[T_w=t]$. Equation \eqref{eq:laterATs_X} recovers average baseline characteristics of those revascularized after the first wave regardless of treatment assignment. Equation \eqref{eq:AT_X} combines \eqref{eq:immediate_AT_X} and \eqref{eq:laterATs_X} to recover the average characteristics of all always-takers by wave $w$. 

\section{Revascularization Effects on Quality of Life}\label{sec:ests}

We estimate causal effects of revascularization on quality-of-life outcomes by 2SLS, clustering standard errors on patient ID when waves are stacked.\footnote{Appendix \ref{sec:efficiency} establishes semiparametric efficiency of the sample analog of the 2SLS (IV) estimand in Theorem \ref{thm:lambda} (proofs for Theorems \ref{th:causaleffectIMCO} and \ref{corollary:Lambda} are similar). Our empirical specifications add baseline covariates (enrollment region and baseline angina frequency scores) to boost precision. Appendix \ref{appendix:simulations} presents a simulation experiment comparing statistical properties of IV, OLS, and a g-method estimator for a data generating process modeled on ISCHEMIA.} The assumptions underpinning this analysis are that treatment assignment is independent of potential outcomes and potential exposure and that some participants are induced to revascularize immediately by assignment to the invasive arm of the trial. The exclusion restriction implicit in Assumption \ref{assump:independence} rules out scenarios in which assignment to the invasive strategy improves quality of life with no change in revascularization. This seems plausible in an RCT focused on revascularization as the prelude to additional cardiac care. Patients assigned to the invasive arm were also much more likely to be revascularized in the first year, ensuring the assignment instrument is relevant.

Monotonicity (Assumption \ref{assump:exposure_monotonicity}) restricts the effect of random assignment on revascularization exposure. Importantly, however,  monotonicity allows unobserved determinants of revascularization including symptoms, comorbidities, quality-of-life history, physician idiosyncrasies and so on.  Monotonicity precludes only scenarios in which assignment to the invasive strategy delays or prevents revascularization---a scenario that seems unlikely.  This restriction is supported by the ISCHEMIA trial protocol and by the exposure distributions in Appendix Figure \ref{f:stoch_dom}, which shows that being assigned invasive induces earlier revascularization for all exposure values by wave 5. This first order stochastic dominance of the distribution of exposure for those assigned invasive relative to those assigned control is a testable implication of Assumption \ref{assump:exposure_monotonicity}.  

\subsection{Immediate Exposure Effects under IMCO}

Theorem \ref{th:causaleffectIMCO} offers a natural starting point for IV estimation of the causal effects of revascularization exposure. Assuming all compliers are induced to revascularize immediately (Assumption \ref{assump:imco}), IV using random assignment to instrument $T_1$ identifies $\tau_w$, an average causal effect for all complier groups. Figure \ref{f:immediate exposure} plots IV estimates of $\tau_w$ (implemented by 2SLS), along with the corresponding OLS estimates of immediate-exposure effects; these are computed by regressing SAQ outcomes on an immediate-exposure dummy in each wave in regression models that include the covariates used for IV. In this context, OLS is an as-treated per-protocol estimator that discards random assignment and is therefore susceptible to unobserved confounding.

IV estimates of immediate-revascularization effects are mostly larger than the ITT estimates in Figure \ref{f:ITTeffects} and Table \ref{t:sample}. ITT estimates for summary scores range from $1.7-3.2$ points; for angina frequency, ITT estimates fall from a high of almost 3.7 in wave $1$ to $1.8$ or less in later waves. 2SLS estimates of SAQ summary score effects, by contrast, start around $4$ and return to this level in wave $5$. Angina frequency gains decline from around 5.5 in the first wave to a low of 2 in wave 3, increasing to around 3 in waves $4$ and $5$.

OLS (as-treated) estimates of immediate-revascularization effects fall over time, declining to a level close to and statistically indistinguishable from zero by wave $5$. Divergence between OLS and 2SLS estimates signals unobserved confounding in the as-treated analysis. Differences in underlying health between compliers and others are documented in Figure \ref{f:potential means}, which plots average potential outcomes for \emph{immediate} compliers, always-takers, and never-takers (defined by $T_1(z)$) by wave. These estimates use Appendix Theorem \ref{thm:late_takers_general}.\footnote{Theorem \ref{thm:late_takers_general} encompasses Theorem \ref{thm:late_takers}, below. The latter is used in Section \ref{sec:popchar} to assess the clinical relevance of IV estimates for dynamic compliers.}

Complier mean SAQ summary and angina frequency scores are precisely estimated at around 90 and 95, respectively. Always- and never-takers are mostly sicker than compliers, with always-takers generally the worst off (estimated non-complier means are less precise than estimated complier means because non-complier groups comprise much smaller shares of the study population). Never-taker average health approaches that of compliers by wave $5$. This likely reflects the fact that sicker patients are likely to be revascularized even when assigned conservative, while patients assigned invasive who go without revascularization are an increasingly healthy group. 

\subsection{Dynamic Effects under Wave Ignorability}

2SLS estimates of cumulative exposure effects, plotted in Figure \ref{f:stacked}, suggest that revascularization generates a persistent improvement in quality of life. These estimates were computed using equation \eqref{eq:Lambda_IV} in Theorem \ref{corollary:Lambda}. Estimated exposure effects on SAQ summary scores are remarkably stable at around four. Revascularization is estimated to improve angina scores by about 5.5 points initially, with long-run gains sustained at around 3 points. Dotted lines in the figure mark average estimated exposure effects on summary scores over waves 1-5 and average estimated exposure effects over waves 2-5 for angina scores. For summary scores, a chi-square test of the hypothesis of equal cumulative effects yields a p-value of $0.14$. For angina frequency, a chi-square test of constant effects in waves 2-5 generates a p-value of $0.86$.\footnote{A test of the null hypothesis of equal angina score effects for all five exposure values generates $\chi^2(4)= 36$, a decisive rejection.}

2SLS estimates of exposure effects are markedly larger---and statistically distinguishable from---the corresponding OLS estimates. 2SLS estimates, OLS estimates, and the difference between them along with associated standard errors appear in Table \ref{t:hausman} for both SAQ outcomes.\footnote{Unlike the specification test comparing IV and OLS estimates introduced in \cite{hausman1978}, these standard errors are computed allowing unrestricted correlation between estimators.}  As can be seen in columns 3 and 6 of the table, differences by type of estimate are statistically significant at conventional levels for most of the summary score estimates and for one of the angina score estimates. For both outcomes, joint tests of equality generate decisive rejections.  It's also noteworthy that all of the 2SLS estimates in the table are significantly different from zero while the wave-$5$ OLS estimates are not.

IV estimates computed using Theorem \ref{corollary:Lambda} are robust to empirically relevant deviations from strong wave ignorability. This is demonstrated in Appendix \ref{appendix:sensitivity}, which shows that an extended model allowing for differences in average cumulative exposure effects across waves generates estimates similar to those reported in Table \ref{t:hausman}.

\subsection{Contrasting Treated Populations}\label{sec:popchar}

As a brief illustration of Theorem \ref{thm:late_takers}, Table \ref{t:compliers} compares mean baseline SAQ scores for the full sample, compliers, and always-takers. Complier means are computed using IMCO; by Theorem \ref{thm:late_takers}(iii)), these are immediate complier means, with differences over time due to attrition. The immediate and marginal always-taker means reported in columns 4-5 of Table \ref{t:compliers} are computed using Theorem \ref{thm:late_takers}(ii) and Theorem \ref{thm:late_takers}(v).  Marginal always-taker means average the baseline health of immediate and later always-takers. 

As is apparent from the first two columns in Table \ref{t:compliers}, mean baseline SAQ scores for compliers are virtually indistinguishable from average baseline scores in the full study sample. This suggests IV estimates of revascularization effects are likely to be relevant for the ISCHEMIA study population. And, consistent with the average potential outcomes plotted in Figure \ref{f:potential means}, estimated mean baseline SAQ scores for always-takers are substantially lower than complier means. 

Estimated immediate always-taker means in column 4 are similar to the marginal always-take means in column 5. Disaggregated complier means are omitted since over 80\% of the compliers allowed under IMCO move from no exposure to immediate exposure in all waves; similarly, we do not report means for the relatively small group of later always-takers.\footnote{The share of compliers with $T_w(0)=t$ as a share of all compliers is identified for all $t<w$ by equation \eqref{eq:tau_IMCO} in Theorem \ref{th:causaleffectIMCO}. Similar to equation \eqref{eq:laterATs_X}, the share of later always-takers (i.e., those with $1\le T_w(1)=T_w(0)<w$) is identified by $E[\mathbf 1[1\le T_w<w]\mid Z=1]$ under IMCO. To see this, note that $E[\mathbf 1[1\le T_w<w]\mid Z=1]=E[\mathbf 1[1\le T_w(1)<w]]=P[1\le T_w(1) = T_w(0) <w]$, where the first equality follows from independence and the second from $T_{w}(1)<w\implies T_w(1)=T_w(0)$ under IMCO.} The gap between complier and always-taker baseline health is around 10 points, comparable to wave-1 standard deviations of baseline scores of around 19 points.

\section{Summary and Conclusions}\label{sec:concl}

This paper develops instrumental variables methods for settings with dynamic treatment exposure and outcomes measured in repeated follow-up waves. Variation in exposure time and repeated follow-ups feature in countless randomized trials.  In strategy trials and other pragmatic trials, exposure variation is outside trialists' control once the assignment die has been cast. Initial assignment notwithstanding, treatment exposure is not random.

Dynamic treatment effects have long posed a challenge for clinical research. Treatment choices made after random assignment are likely correlated with post-trial potential outcomes. Our instrumental variables framework identifies average causal effects as a function of exposure time without imposing either conditional independence of treatment exposure and potential outcomes or constant treatment effects. The identification results developed and applied here exploit novel identifying assumptions that we've called IMCO and wave ignorability. IMCO is motivated by compliance dynamics observed in ISCHEMIA data; wave ignorability mirrors event-study models that presume exposure effects are unrelated to calendar time. Along with standard IV assumptions, IMCO and wave ignorability deliver causal effects of time-varying exposure that are easily estimated by 2SLS.

When applied to ISCHEMIA trial data, these tools reveal quality of life improvements that are substantially greater and more persistent than previously reported ITT and per-protocol estimates would suggest.  IV estimates imply revascularization of patients with ischemic heart disease yields sustained gains in SAQ summary scores on the order of four points. Moreover, after an initial bump up to 5.5, IV estimates of effects on angina frequency scores are stable at around three points.

In view of these estimates, it's noteworthy that conventional as-treated per-protocol (OLS) estimates decline markedly over time. Dynamic latent-group characterizations reveal this to be an artifact of the poor health of always-takers and improving health of never-takers. In particular, control group crossovers who are revascularized regardless of assignment are far sicker than compliers. At the same time, never-takers, who forgo revascularization when assigned invasive, constitute an increasingly healthy group. Consequently, the gap in SAQ scores by treatment received shrinks.

The finding that as-treated per-protocol estimates are misleading is not unique to ISCHEMIA. In an analysis of data from a randomized mammography screening trial, \cite{kowalski2023behaviour} shows mammography always-takers to be healthier than screening compliers, leading as-treated (i.e., as-screened) estimates to exaggerate gains from screening.\footnote{\cite{einavetal2020} likewise documents nonrandom selection in mammography.}

Finally, our ISCHEMIA results weigh against the view that IV complier populations are likely to be unrepresentative of any clinical population of interest (see, for instance, \cite{hernan_per-protocol_2017}). The baseline health of ISCHEMIA compliers is similar to that of the overall study population. Of course, this is an empirical result and not a theorem. But this finding too is not unique to ISCHEMIA.  In a static IV analysis of colorectal cancer screening trials, \cite{pnas2023} report mean complier characteristics much like those of the population invited for cancer screening. We expect the IV tools developed here and elsewhere to generate new, clinically useful findings in a range of applications to come.   


\singlespacing
\setlength\bibsep{4pt}
\bibliographystyle{bib/aea} {\footnotesize
\bibliography{bib/references.bib} }

@article{Qu2020_general_framework_treatment_effects,
  title        = {A General Framework for Treatment Effect Estimators Considering Patient Adherence},
  author       = {Qu, Yongming and Fu, Haoda and Luo, Junxiang and Ruberg, Stephen J.},
  journal      = {Statistics in Biopharmaceutical Research},
  year         = {2020},
  volume       = {12},
  number       = {1},
  pages        = {1--18},
  doi          = {10.1080/19466315.2019.1700157},
}

@incollection{Robins2000MSMvsSNM,
  author       = {Robins, James M.},
  title        = {Marginal Structural Models Versus Structural Nested Models as Tools for Causal Inference},
  booktitle    = {Statistical Models in Epidemiology: The Environment and Clinical Trials},
  editor       = {Halloran, M. Elizabeth and Berry, Donald},
  series       = {IMA Volumes in Mathematics and its Applications},
  volume       = {116},
  publisher    = {Springer},
  address      = {New York, NY, USA},
  pages        = {95--134},
  year         = {2000}
}

@incollection{Robins1997ComplexLongitudinal,
  author    = {Robins, James M.},
  title     = {Causal Inference from Complex Longitudinal Data},
  booktitle = {Latent Variable Modeling and Applications to Causality},
  editor    = {Berkane, Mohamed},
  publisher = {Springer-Verlag},
  address   = {New York},
  year      = {1997}
}

@incollection{robins1998structural,
  author       = {Robins, James M.},
  title        = {Structural Nested Failure Time Models},
  booktitle    = {Encyclopedia of Biostatistics},
  editor       = {Armitage, Peter and Colton, Timothy},
  volume       = {6},
  pages        = {4372--4389},
  year         = {1998},
  publisher    = {John Wiley \& Sons},
  address      = {Chichester, UK},
  doi          = {10.1002/9780470119479.ebs0611},
}

@article{robins2004optimal,
  title   = {Optimal Structural Nested Models for Optimal Sequential Decisions},
  author  = {Robins, James M.},
  journal = {Proceedings of the Second Seattle Symposium on Biostatistics},
  year    = {2004}
}

@article{vytlacil2002,
  title   = {Independence, Monotonicity, and Latent Index Models: An Equivalence Result},
  author  = {Edward Vytlacil},
  journal = {Econometrica},
  volume = {70},
  number = {1},
  pages = {331-341},
  year    = {2002}
}

@article{robins1994snm,
  title   = {Correcting for Non-Compliance in Randomized Trials Using Structural Nested Mean Models},
  author  = {Robins, James M.},
  journal = {Communications in Statistics - Theory and Methods},
  year    = {1994},
  volume  = {23},
  number  = {8},
  pages   = {2379--2412}
}

@article{hernan2006instruments,
  title   = {Instruments for Causal Inference: An Epidemiologist's Dream?},
  author  = {Hern{\'a}n, Miguel A. and Robins, James M.},
  journal = {Epidemiology},
  year    = {2006},
  volume  = {17},
  number  = {4},
  pages   = {360--372}
}

@incollection{RobinsHernan2008TimeVaryingExposures,
  author    = {Robins, James M. and Hern{\'a}n, Miguel A.},
  title     = {Estimation of the causal effects of time-varying exposures},
  booktitle = {Longitudinal Data Analysis},
  chapter   = {23},
  year      = {2008},
  pages     = {553--599},
  publisher = {Chapman \& Hall/CRC}
}

@book{hernan_robins_2020,
  title  = {Causal Inference: What If},
  author = {Hern{\'a}n, Miguel A. and Robins, James M.},
  year   = {2020},
  note   = {Chapman \& Hall/CRC}
}

@article{Rytgaard2022ContinuousTimeTMLE,
  author       = {Rytgaard, Helene Charlotte W. and Gerds, Thomas A. and van der Laan, Mark J.},
  title        = {Continuous-time targeted minimum loss-based estimation of intervention-specific mean outcomes},
  journal      = {The Annals of Statistics},
  volume       = {50},
  number       = {5},
  pages        = {2469--2491},
  year         = {2022}
}

@article{OlarteParra2022_hypothetical_estimands,
  title        = {Hypothetical Estimands in Clinical Trials: A Unification of Causal Inference and Missing Data Methods},
  author       = {Olarte Parra, Camila and Daniel, Rhian M. and Bartlett, Jonathan W.},
  journal      = {Statistics in Biopharmaceutical Research},
  year         = {2023},
  volume       = {15},
  number       = {2},
  pages        = {421--432},
}

@article{Luo2022_adherers_multiple_imputation,
  title        = {Estimating the Treatment Effect for Adherers Using Multiple Imputation},
  author       = {Luo, Junxiang and Ruberg, Stephen J. and Qu, Yongming},
  journal      = {Pharmaceutical Statistics},
  year         = {2022},
  volume       = {21},
  number       = {3},
  pages        = {525--534},
  doi          = {10.1002/pst.2184},
}

@article{shi2021instrumental,
  title        = {Instrumental variable estimation for a time-varying treatment and a time-to-event outcome via structural nested cumulative failure time models},
  author       = {Shi, Joy and Swanson, Sonja A. and Kraft, Peter and Rosner, Bernard and De Vivo, Immaculata and Hern{\'a}n, Miguel A.},
  journal      = {BMC Medical Research Methodology},
  volume       = {21},
  number       = {1},
  pages        = {258},
  year         = {2021},
  publisher    = {Springer Nature},
}

@article{Michael2024_IV_MSM_TimeVarying,
  title        = {Instrumental Variable Estimation of Marginal Structural Mean Models for Time‐Varying Treatment},
  author       = {Michael, Haben and Cui, Yifan and Lorch, Scott A. and Tchetgen Tchetgen, Eric J.},
  journal      = {Journal of the American Statistical Association},
  volume       = {119},
  number       = {546},
  pages        = {1240--1251},
  year         = {2024},
  doi          = {10.1080/01621459.2023.2183131}
}

@article{Yende-Zuma2019_AdjustingEffectIntegratingARTTB,
  title        = {Adjusting the Effect of Integrating Antiretroviral Therapy and Tuberculosis Treatment on Mortality for Noncompliance: A Time-varying Instrumental Variables Analysis},
  author       = {Yende-Zuma, Nonhlanhla and Mwambi, Henry and Vansteelandt, Stijn},
  journal      = {Epidemiology},
  volume       = {30},
  number       = {2},
  pages        = {197--203},
  year         = {2019},
  doi          = {10.1097/EDE.0000000000000923},
}

@article{Michiels2024_AdjustingTimeVaryingSwitches,
  title        = {Adjusting for Time-Varying Treatment Switches in Randomized Clinical Trials: The Danger of Extrapolation and How to Address It},
  author       = {Michiels, Hege and Vandebosch, An and Vansteelandt, Stijn},
  journal      = {Statistics in Biopharmaceutical Research},
  volume       = {17},
  number       = {1},
  pages        = {54--66},
  year         = {2024}
}

@article{ying2023structural,
  title        = {Structural cumulative survival models for estimation of treatment effects accounting for treatment switching in randomized experiments},
  author       = {Ying, Andrew and Tchetgen Tchetgen, Eric J.},
  journal      = {Biometrics},
  volume       = {79},
  number       = {3},
  pages        = {1597--1609},
  year         = {2023},
  doi          = {10.1111/biom.13704},
  url          = {https://doi.org/10.1111/biom.13704},
  publisher    = {Wiley-Blackwell}
}

@article{martinussen2017instrumental,
  title = {Instrumental variables estimation of exposure effects on a time-to-event endpoint using structural cumulative survival models},
  author = {Martinussen, Torben and Vansteelandt, Stijn and Tchetgen Tchetgen, Eric J. and Zucker, David M.},
  journal = {Biometrics},
  volume = {73},
  number = {4},
  pages = {1140--1149},
  year = {2017},
  month = dec,
  doi = {10.1111/biom.12699},
  publisher = {Wiley-Blackwell},
  issn = {0006-341X},
  abstract = {The use of instrumental variables for estimating the effect of an exposure on an outcome is popular in econometrics, and increasingly so in epidemiology. ...},
}

@article{miller2023,
  title={An Introductory Guide to Event Study Models},
  author={Miller, Douglas L.},
  journal={Journal of Economic Perspectives},
  volume={37},
  number={2},
  pages={203--230},
  year={2023},
  publisher={American Economic Association}
}

@article{Bowden2025,
  author       = {Bowden, Jack and Madsen, Jesper and Goldman, Bryan and Iversen, Aske\,Thorn and Liang, Xiaoran and Vansteelandt, Stijn},
  title        = {Instrumental Variable Methods to Target Hypothetical Estimands With Longitudinal Repeated Measures Data: Application to the STEP 1 Trial},
  journal      = {Statistics in Medicine},
  year         = {2025},
  note         = {Open access},
  doi          = {10.1002/sim.70076}
}

@article{bangalore2020,
    title={Management of Coronary Disease in Patients with Advanced Kidney Disease},
    author= {Bangalore, Sripal and Maron, David J. and O'Brien, Sean M. and Fleg, Jerome L. and Kretov, Evgeny I. and Briguori, Carlo and Kaul, Upendra and others},
    year={2020},
    journal={New England Journal of Medicine},
    volume = {382},
    number = {17},
    pages = {1608--1618},
    doi={10.1056/NEJMoa1915925}
}

@article{bangalore2022,
    title={Clinical and Quality-of-Life Outcomes Following Invasive vs Conservative Treatment of Patients With Chronic Coronary Disease Across the Spectrum of Kidney Function},
    author={Bangalore, Sripal and Hochman, Judith S. and Stevens, Susanna R. and Jones, Philip G. and Spertus, John A. and O'Brien, Sean M. and Reynolds, Harmony R. and Boden, William E. and Fleg, Jerome L. and Williams, David O. and Stone, Gregg W. and Sidhu, Mandeep S. and Mathew, Roy O. and Chertow, Glenn M. and Maron, David J.},
    journal={JAMA Cardiology},
    year={2022},
    volume = {7},
    number = {8},
    pages = {825--835},
    doi={10.1001/jamacardio.2022.1763}
}

@techreport{newkappa2025,
 title = "One Instrument, Many Treatments: Instrumental Variables Identification of Multiple Causal Effects",
 author = "Angrist, Joshua and Santos, Andres and Tecchio, Otávio",
 institution = "National Bureau of Economic Research",
 type = "Working Paper",
 series = "Working Paper Series",
 number = "34607",
 year = "2025",
}

@article{einavetal2020,
Author = {Einav, Liran and Finkelstein, Amy and Oostrom, Tamar and Ostriker, Abigail and Williams, Heidi},
Title = {Screening and Selection: The Case of Mammograms},
Journal = {American Economic Review},
Volume = {110},
Number = {12},
Year = {2020},
Month = {December},
Pages = {3836–70},
DOI = {10.1257/aer.20191191}
}

@ARTICLE{angrist2004,
title = {Treatment Effect Heterogeneity in Theory and Practice},
author = {Angrist, Joshua},
year = {2004},
journal = {Economic Journal},
volume = {114},
number = {494},
pages = {C52-C83}
}

@article{kunadian2024,
author = {Vijay Kunadian  and Helen Mossop  and Carol Shields  and Michelle Bardgett  and Philippa Watts  and M. Dawn Teare  and Jonathan Pritchard  and Jennifer Adams-Hall  and Craig Runnett  and David P. Ripley  and Justin Carter  and Julie Quigley  and Justin Cooke  and David Austin  and Jerry Murphy  and Damian Kelly  and James McGowan  and Murugapathy Veerasamy  and Dirk Felmeden  and Hussain Contractor  and Sanjay Mutgi  and John Irving  and Steven Lindsay  and Gavin Galasko  and Kelvin Lee  and Ayyaz Sultan  and Amardeep G. Dastidar  and Shazia Hussain  and Iftikhar Ul Haq  and Mark de Belder  and Martin Denvir  and Marcus Flather  and Robert F. Storey  and David E Newby  and Stuart J. Pocock  and Keith A.A. Fox },
title = {Invasive Treatment Strategy for Older Patients with Myocardial Infarction},
journal = {New England Journal of Medicine},
volume = {391},
number = {18},
pages = {1673-1684},
year = {2024},
doi = {10.1056/NEJMoa2407791},
eprint = {https://www.nejm.org/doi/pdf/10.1056/NEJMoa2407791}
,
    abstract = { Among older patients with non–ST-segment elevation myocardial infarction (MI), an invasive strategy did not result in a lower risk of cardiovascular death or nonfatal MI than a conservative strategy. }
}

@article{hansenGMM,
 author = {Lars Peter Hansen},
 journal = {Econometrica},
 number = {4},
 pages = {1029--1054},
 publisher = {[Wiley, Econometric Society]},
 title = {Large Sample Properties of Generalized Method of Moments Estimators},
 urldate = {2026-05-26},
 volume = {50},
 year = {1982}
}

@article{conley12plausibly,
    author = {Conley, Timothy G. and Hansen, Christian B. and Rossi, Peter E.},
    title = {Plausibly Exogenous},
    journal = {The Review of Economics and Statistics},
    volume = {94},
    number = {1},
    pages = {260-272},
    year = {2012},
}

@article{mastenpoirier_salvaging,
author = {Masten, Matthew A. and Poirier, Alexandre},
title = {Salvaging Falsified Instrumental Variable Models},
journal = {Econometrica},
volume = {89},
number = {3},
pages = {1449-1469},
year = {2021}
}

@article{cinellihazlett,
    author = {Cinelli, Carlos and Hazlett, Chad},
    title = {An omitted variable bias framework for sensitivity analysis of instrumental variables},
    journal = {Biometrika},
    volume = {112},
    number = {2},
    pages = {asaf004},
    year = {2025},
}

@book{van2000asymptotic,
  title={Asymptotic statistics},
  author={{van der Vaart}, Aad W},
  volume={3},
  year={2000},
  publisher={Cambridge university press}
}

@article{chensantos_overid,
author = {Chen, Xiaohong and Santos, Andres},
title = {Overidentification in Regular Models},
journal = {Econometrica},
volume = {86},
number = {5},
pages = {1771-1817},
year = {2018}
}

@article{cabanaPP,
  title={Interpreting the Results of Intention-to-Treat, Per-Protocol, and As-Treated Analyses of Clinical Trials},
  author={Smith, Valerie A. and Coffman, Cynthia J. and Hudgens, Michael G.},
  journal={JAMA},
  volume={326},
  number={5},
  pages={433--434},
  year={2021}
}

@article{Morden2011,
  author  = {Morden, James P and Lambert, Paul C and Latimer, Nicholas and Abrams, Keith R and Wailoo, Allan J},
  title   = {Assessing methods for dealing with treatment switching in randomised controlled trials: a simulation study},
  journal = {BMC Medical Research Methodology},
  year    = {2011},
  volume  = {11},
  pages   = {4},
  doi     = {10.1186/1471-2288-11-4}
}

@article{Latimer2018,
  author  = {Latimer, Nicholas R and Abrams, Keith R and Lambert, Paul C and Morden, James P and Crowther, Michael J},
  title   = {Assessing methods for dealing with treatment switching in clinical trials: {A} follow-up simulation study},
  journal = {Statistical Methods in Medical Research},
  year    = {2018},
  volume  = {27},
  number  = {3},
  pages   = {765--784},
  doi     = {10.1177/0962280216642264}
}

@article{RobinsTsiatis1991RPSFTM,
  author  = {Robins, James M. and Tsiatis, Anastasios A.},
  title   = {Correcting for Non-Compliance in Randomized Trials Using Rank-Preserving Structural Failure Time Models},
  journal = {Communications in Statistics - Theory and Methods},
  year    = {1991},
  volume  = {20},
  number  = {8},
  pages   = {2609--2631},
  doi     = {10.1080/03610929108830654},
  issn    = {0361-0926}
}

@article{hernan_per-protocol_2017,
	title = {Per-{protocol} {Analyses} of {Pragmatic} {Trials}},
	volume = {377},
	issn = {0028-4793, 1533-4406},
	doi = {10.1056/NEJMsm1605385},
	language = {en},
	number = {14},
	urldate = {2023-02-02},
	journal = {New England Journal of Medicine},
	author = {Hernán, Miguel A. and Robins, James M.},
	month = oct,
	year = {2017},
	pages = {1391--1398}
}

@article{aronow2013,
	title = {Sharp Bounds for Complier Average Potential Outcomes in Experiments with Noncompliance and Incomplete reporting},
	volume = {47},
	number = {2-3},
	journal = {Statistics \& Probability Letters},
	author = {Peter M. Aronow and Donald P. Green},
	year = {2013},
	pages = {677--679},
}

@article{ferman_tecchio_2023,
	title = {Identifying Dynamic LATEs with a Static Instrument},
	language = {en},
	journal = {arXiv 2305.18114},
	author = {Bruno Ferman and Otavio Tecchio},
	month = dec,
	year = {2023},
}

@article{late94,
	title = {Identification and Estimation of Local Average Treatment Effects},
	volume = {62},
	issn = {00129682},
	number = {2},
	journal = {Econometrica},
	author = {Imbens, Guido W. and Angrist, Joshua D.},
	year = {1994},
	pages = {467--475},
}

@article{AIR,
	title = {Identification of Causal Effects Using Instrumental Variables},
	volume = {91},
	number = {434},
	journal = {Journal of the American Statistical Association},
	author = {Angrist, Joshua D. and Imbens, Guido W. and Rubin, Donald B.},
	year = {1996},
	keywords = {compliers},
	pages = {444--455},
}

@article{bloom1984,
	title = {Accounting for no-shows in experimental evaluation designs},
	volume = {8},
	number = {2},
	journal = {Evaluation Review},
	author = {Bloom, Howard S},
	year = {1984},
	pages = {225--246},
}

@article{imbensrubin97,
	title = {Estimating Outcome Distributions for Compliers in Instrumental Variables Models},
	volume = {64},
	number = {4},
	journal = {The Review of Economic Studies},
	author = {Imbens, Guido W and Rubin, Donald B},
	year = {1997},
	pages = {555--574},
}

@article{baker94,
	title = {The paired availability design: a proposal for evaluating epidural analgesia during labor},
	volume = {13},
	number = {21},
	journal = {Statistics in Medicine},
	author = {Baker, Stuart G and Lindeman, Karen S},
	year = {1994},
	pages = {2269--2278},
}

@article{baker2024,
	title = {Multiple discoveries in causal inference: LATE for the party},
	volume = {37},
	number = {2},
	journal = {Chance},
	author = {Baker, Stuart G and Lindeman, Karen S},
	year = {2024},
	pages = {21--25},
}

@article{angrist_imbens95,
	title = {Two-Stage Least Squares Estimation of Average Causal Effects in Models with Variable Treatment Intensity},
	volume = {90},
	number = {430},
	journal = {Journal of the American Statistical Association},
	author = {Angrist, Joshua D and Imbens, Guido W},
	year = {1995},
	pages = {431--442},
}

@article{abadie2003,
	title = {Semiparametric Instrumental Variable Estimation of Treatment Response Models},
	volume = {113},
	number = {2},
	journal = {Journal of Econometrics},
	author = {Abadie, Alberto},
	year = {2003},
	note = {Publisher: Elsevier},
	pages = {231--263},
}

@article{kowalski2023behaviour,
  title={Behaviour within a Clinical Trial and Implications for Mammography Guidelines},
  author={Kowalski, Amanda E},
  journal={The Review of Economic Studies},
  volume={90},
  number={1},
  pages={432--462},
  year={2023},
  publisher={Oxford University Press}
}

@article{newcombe88,
  title={Explanatory and pragmatic estimates of the treatment effect when deviations from allocated treatment occur},
  author={Newcombe, RG},
  journal={Statistics in Medicine},
  volume={7},
  number={11},
  pages={1179--1186},
  year={1988},
  publisher={Wiley Online Library}
}

@article{pnas2023,
  title={Instrumental Variables Methods Reconcile Intention-to-Screen Effects Across Pragmatic Cancer Screening Trials},
  author={Angrist, Joshua D and Hull, Peter},
  journal={Proceedings of the National Academy of Sciences},
  volume={120},
  number={51},
  pages={e2311556120},
  year={2023},
  publisher={National Acad Sciences}
}

@article{spertus2020,
  title={Health-Status Outcomes with Invasive or Conservative Care in Coronary Disease},
  author={Spertus, John A and Jones, Philip G and Maron, David J and O’Brien, Sean M and Reynolds, Harmony R and Rosenberg, Yves and Stone, Gregg W and Harrell Jr, Frank E and Boden, William E and Weintraub, William S and others},
  journal={New England Journal of Medicine},
  volume={382},
  number={15},
  pages={1408--1419},
  year={2020},
  publisher={Mass Medical Soc}
}

@article{spertus2020ckd,
  title={Health Status after Invasive or Conservative Care in Coronary and Advanced Kidney Disease},
  author={Spertus, John A and Jones, Philip G and Maron, David J and Mark, Daniel B and O’Brien, Sean M and Fleg, Jerome L and Reynolds, Harmony R and Stone, Gregg W and Sidhu, Mandeep S and Chaitman, Bernard R and others},
  journal={New England Journal of Medicine},
  volume={382},
  number={17},
  pages={1619--1628},
  year={2020},
  publisher={Mass Medical Soc}
}

@article{maron2020,
  title={Initial Invasive or Conservative Strategy for Stable Coronary Disease},
  author={Maron, David J and Hochman, Judith S and Reynolds, Harmony R and Bangalore, Sripal and O’Brien, Sean M and Boden, William E and Chaitman, Bernard R and Senior, Roxy and L{\'o}pez-Send{\'o}n, Jose and Alexander, Karen P and others},
  journal={New England Journal of Medicine},
  volume={382},
  number={15},
  pages={1395--1407},
  year={2020},
  publisher={Mass Medical Soc}
}

@article{rose2021,
  title={On Recoding Ordered Treatments as Binary Indicators},
  author={Rose, Evan K and Shem-Tov, Yotam},
  journal={arXiv 2111.12258},
  year={2021},
  month = {November}
}

@article{callaway2021,
title = {Difference-in-Differences with Multiple Time Periods},
journal = {Journal of Econometrics},
volume = {225},
number = {2},
pages = {200-230},
year = {2021},
author = {Brantly Callaway and Pedro H.C. Sant’Anna}}

@article{ATHEY202262,
title = {Design-based Analysis in Difference-In-Differences Settings with Staggered Adoption},
journal = {Journal of Econometrics},
volume = {226},
number = {1},
pages = {62-79},
year = {2022},
issn = {0304-4076},
doi = {https://doi.org/10.1016/j.jeconom.2020.10.012},
author = {Susan Athey and Guido W. Imbens}}

@article{borusyak2021revisiting,
    author = {Borusyak, Kirill and Jaravel, Xavier and Spiess, Jann},
    title = {Revisiting Event-Study Designs: Robust and Efficient Estimation},
    journal = {The Review of Economic Studies},
    volume = {91},
    number = {6},
    pages = {3253-3285},
    year = {2024},
    month = {02}}

@article{hausman1978,
  title={Specification Tests in Econometrics},
  author={Hausman, Jerry A},
  journal={Econometrica},
  volume = {46},
  number = {6},
  pages={1251--1271},
  year={1978},
  publisher={JSTOR}
}

@TechReport{besnesIVF,
  author={Bensnes, Simon and Huitfeldt, Ingrid and Leuven, Edwin},
  title={{Reconciling Estimates of the Long-Term Earnings Effect of Fertility}},
  year=2023,
  institution={Institute of Labor Economics},
  type={IZA Discussion Papers},
  number={16174},}

@article{gallenIVF,
  title={The Labor Market Returns to Delaying Pregnancy},
  author={Gallen, Yana and Joensen, Juanna Schr{\o}ter and Johansen, Eva Rye and Veramendi, Gregory F},
  journal={Available at SSRN 4554407},
  year={2024}}

@TechReport{lundborg2024,
  title={Is There Really a Child Penalty in the Long Run? New Evidence from IVF Treatments},
  author={Lundborg, Petter and Plug, Erik and Rasmussen, Astrid W{\"u}rtz},
  year={2024},
  institution={IZA Discussion Papers},
number={16959}
}

@article{bronars_aertwins,
 author = {Stephen G. Bronars and Jeff Grogger},
 journal = {The American Economic Review},
 number = {5},
 pages = {1141--1156},
 publisher = {American Economic Association},
 title = {The Economic Consequences of Unwed Motherhood: Using Twin Births as a Natural Experiment},
 volume = {84},
 year = {1994}
}

@article{lundborg2017,
Author = {Lundborg, Petter and Plug, Erik and Rasmussen, Astrid Würtz},
Title = {Can Women Have Children and a Career? IV Evidence from IVF Treatments},
Journal = {American Economic Review},
Volume = {107},
Number = {6},
Year = {2017},
Pages = {1611-37}}

@article{bocarhamlalonde2017,
author = {Ba, Bocar A. and Ham, John C. and LaLonde, Robert J. and Li, Xianghong},
title = {Estimating (Easily Interpreted) Dynamic Training Effects from Experimental Data},
journal = {Journal of Labor Economics},
volume = {35},
number = {S1},
pages = {S149-S200},
year = {2017}}

@article{jiangchendingbiomk2023,
    author = {Jiang, Zhichao and Chen, Shizhe and Ding, Peng},
    title = {An Instrumental Variable Method for Point Processes: Generalized Wald Estimation Based on Deconvolution},
    journal = {Biometrika},
    volume = {110},
    number = {4},
    pages = {989-1008},
    year = {2023}}

@article{lin2008,
  title={Longitudinal Nested Compliance Class Model in the Presence of Time-varying Noncompliance},
  author={Lin, Julia Y and Ten Have, Thomas R and Elliott, Michael R},
  journal={Journal of the American Statistical Association},
  volume={103},
  number={482},
  pages={462--473},
  year={2008},
  publisher={Taylor \& Francis}
}

@article{mattei2024,
  title={Assessing Causal Effects in the Presence of Treatment Switching Through Principal Stratification},
  author={Mattei, Alessandra and Ding, Peng and Ballerini, Veronica and Mealli, Fabrizia},
  journal={Bayesian analysis},
  year={2024},
  publisher={International Society for Bayesian Analysis}
}

@article{vohragoldin2024,
    author = {Vohra, Vedant and Goldin, Jacob},
    title = {Identifying the Cumulative Causal Effect of a Non-Binary Treatment from a Binary Instrument},
    journal = {The Review of Economics and Statistics},
    pages = {1-20},
    year = {2024}
}

\clearpage 


\pagebreak
\onehalfspacing
\section*{} \label{sec:exhibits}
\addcontentsline{toc}{section}{Figures and Tables}
\renewcommand{\thesubfigure}{\roman{subfigure}}
\captionsetup[subfigure]{font=scriptsize,labelfont=scriptsize}

\begin{figure}
    \caption{Intention-to-treat Effects by Wave}
    \label{f:ITTeffects}
    \centering
    \begin{minipage}{16.5cm}        {\centering{\includegraphics[width=\linewidth]{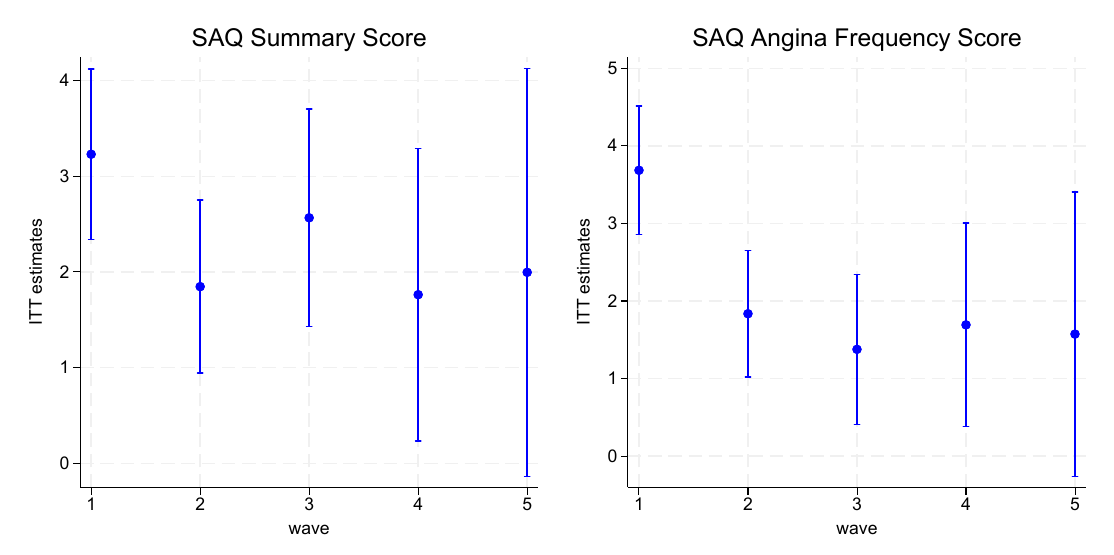}}\par}
        \footnotesize \textit{Notes:} This figure reports intention-to-treat (ITT) estimates from the ISCHEMIA trial. These contrast average SAQ scores by treatment assigned. Both outcome variables have means in the range 86-94, with a standard deviation of 13-15. SAQ scores are measured at the time of follow up. These estimates control for baseline angina frequency scores (as in \cite{spertus2020}) and for enrollment region.  Patients who were deceased or did not complete a follow-up questionnaire are omitted. \par
    \end{minipage}
\end{figure}

\begin{figure}[H]
    \caption{Histogram of Revascularization Exposure by Treatment Assigned}
    \label{f:exposure histogram}
    \centering
    \begin{minipage}{16.5cm}
        {\centering{\includegraphics[width=\linewidth]{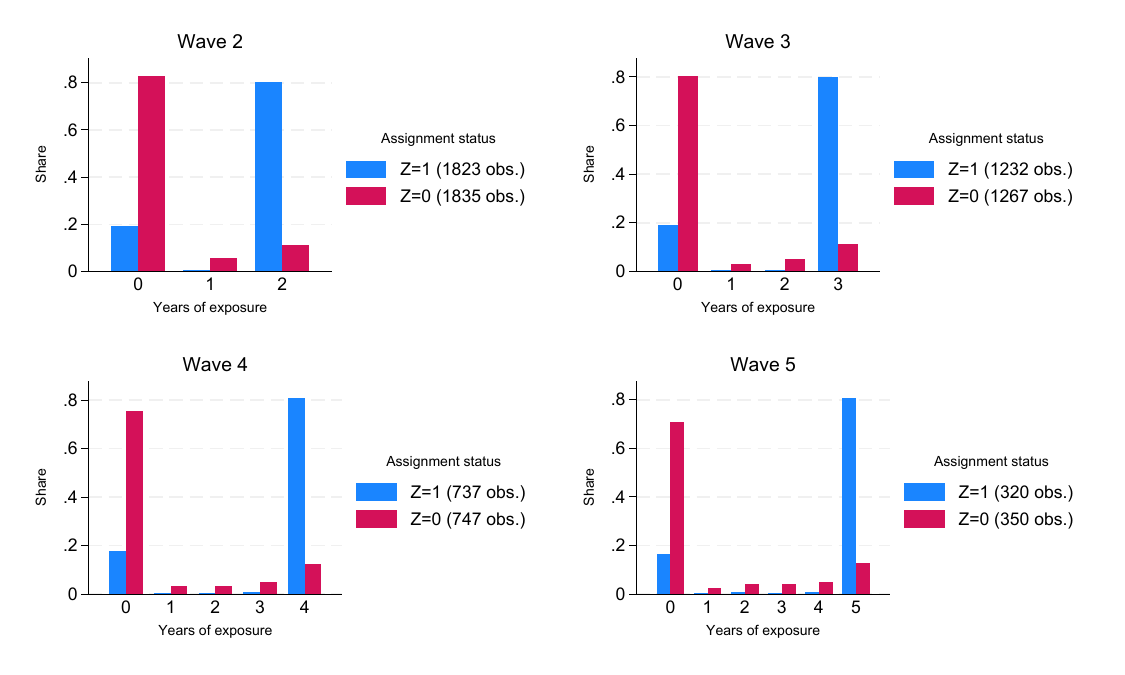}\par}}
        \footnotesize \textit{Notes:} This figure plots the histogram of revascularization exposure by follow-up wave.  Exposure $T_w=0$ for participants not revascularized as of wave $w$. \par
    \end{minipage}
\end{figure}

\begin{figure}[H]
    \caption{Immediate-Exposure Revascularization Effects by Wave}
    \label{f:immediate exposure}
    \centering
    \begin{minipage}{16.5cm}
        {\centering{\includegraphics[width=\linewidth]{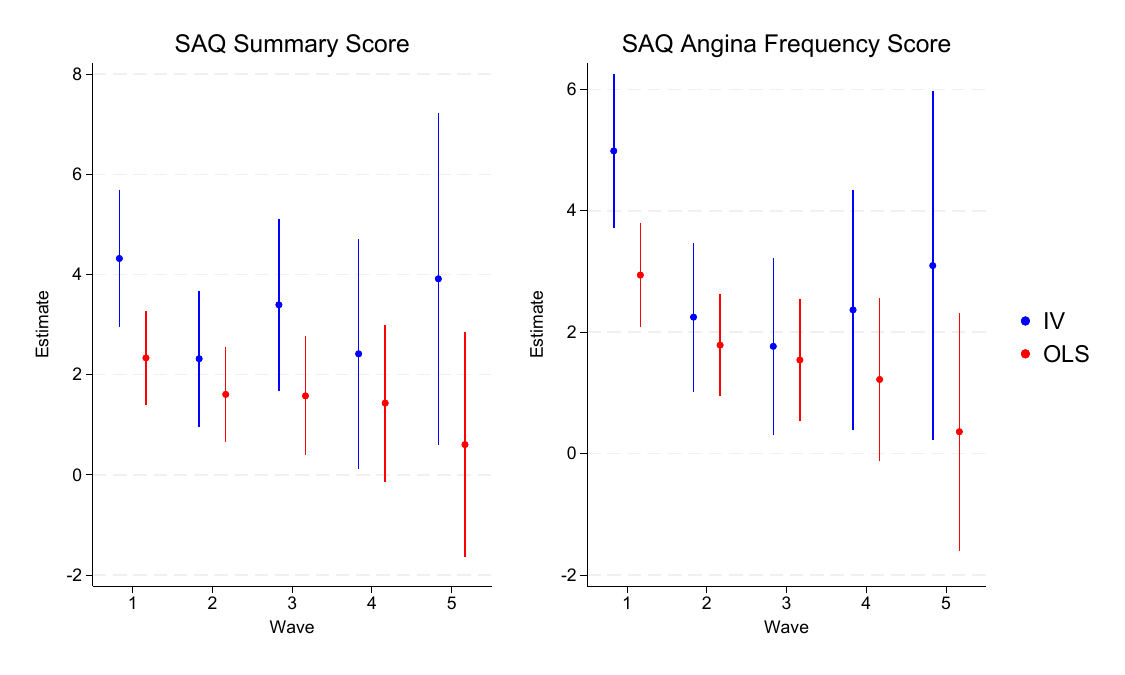}}\par}
        \footnotesize \textit{Notes:} This figure plots IV (implemented by 2SLS) and OLS estimates of average immediate exposure effects in each wave, identified by IMCO, using equation \eqref{eq:initial-esxposure_linearmodel} in Theorem \ref{th:causaleffectIMCO}. IV estimates use assignment to instrument immediate-exposure, $T_1$. OLS estimates are the corresponding (uninstrumented) as-treated $T_1$ effect. Controls are the same as in Figure \ref{f:ITTeffects}. Standard errors are clustered at the individual level. \par
    \end{minipage}
\end{figure}

\begin{figure}[H]
    \caption{Potential Outcome Means for Immediate-Exposure Groups by Wave}
    \label{f:potential means}
    \centering
    \begin{minipage}{16.5cm}
        {\centering{\includegraphics[width=\linewidth]{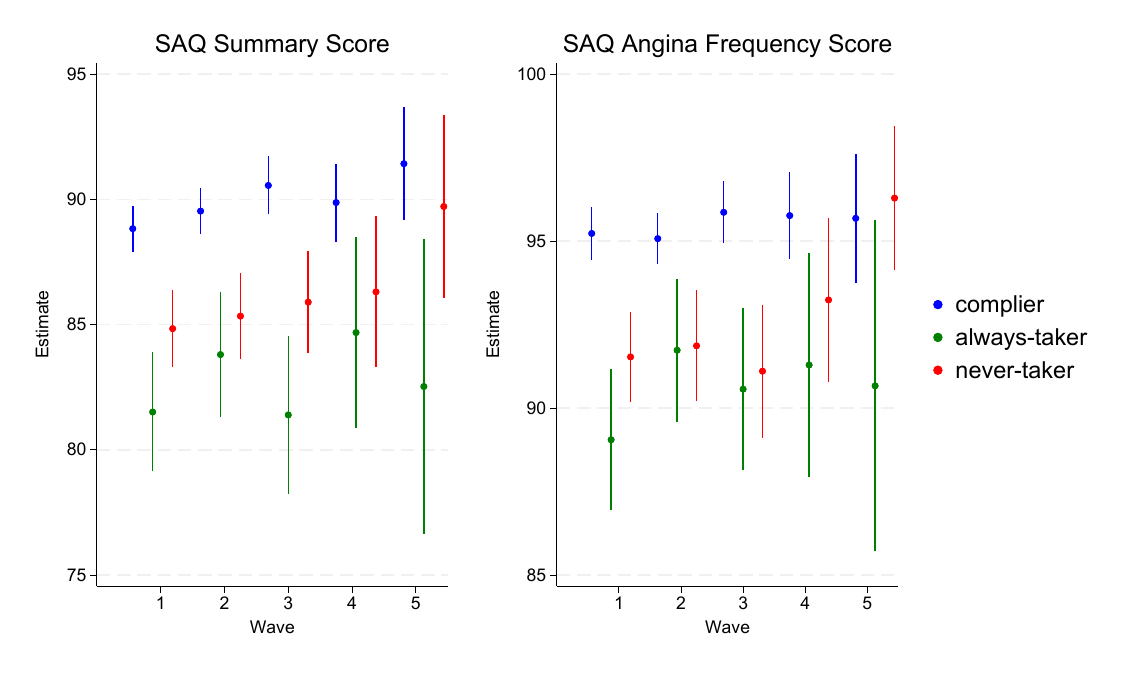}}\par}
        \footnotesize \textit{Notes:} This figure plots potential outcome means for $T_1$ compliers, always-takers, and never-takers in each wave, using equations \eqref{eq:immediate_complier_X_general},\eqref{eq:immediate_AT_X_general}, and \eqref{eq:immediate_NT_X_general} in Theorem \ref{thm:late_takers_general}. Compliers are treated immediately (i.e., in wave 1) when assigned $Z=1$ but not otherwise. Always- (never-)takers are treated (untreated) in wave 1 regardless of assignment. \par
    \end{minipage}
\end{figure}

\begin{figure}[H]
    \caption{2SLS Estimates of Revascularization Exposure Effects}
    \label{f:stacked}
    \centering
    \begin{minipage}{16.5cm}
        {\centering{\includegraphics[width=\linewidth]{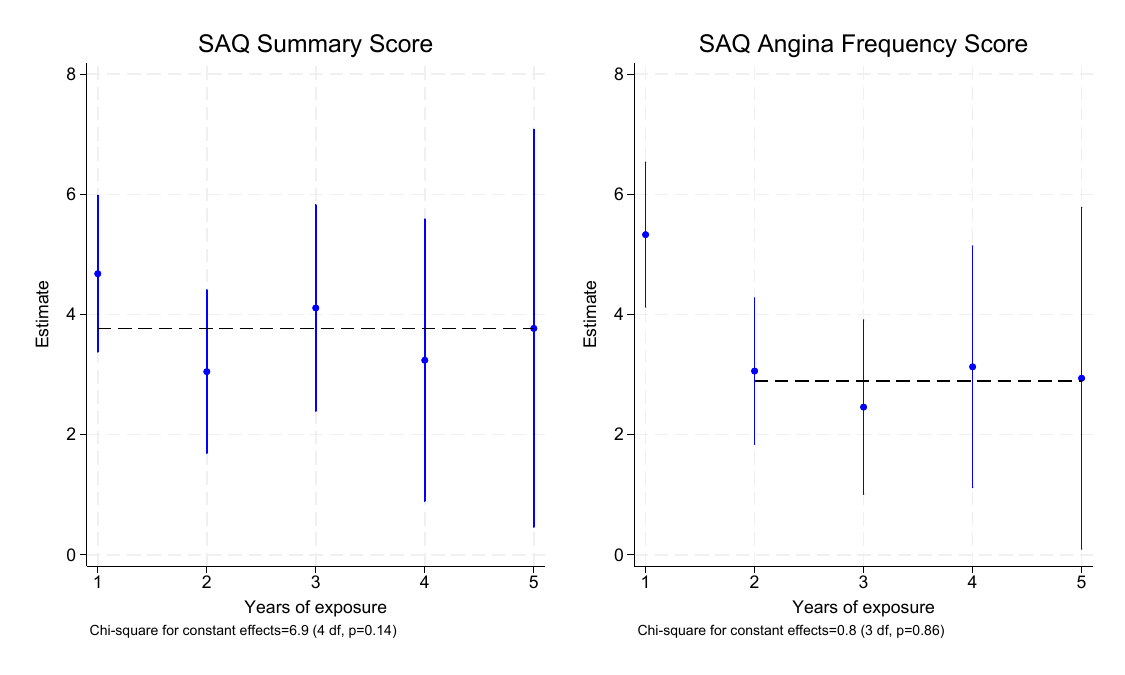}}\par}
        \footnotesize \textit{Notes:} This figure plots 2SLS estimates of average causal effects of 1-5 years of revascularization exposure, relative to never having been revascularized, using equation \eqref{eq:Lambda_IV} of Theorem \ref{corollary:Lambda}. Controls are the same as in Figure \ref{f:ITTeffects}. Standard errors are clustered at the individual level. \par
    \end{minipage}
\end{figure}

\clearpage
\begin{table}[H]
    \caption{Means, Treatment Rates, and ITT Effects}
    \label{t:sample}
        \centering
    \begin{minipage}{6.5in}
        {\centering\includegraphics[width=16cm]{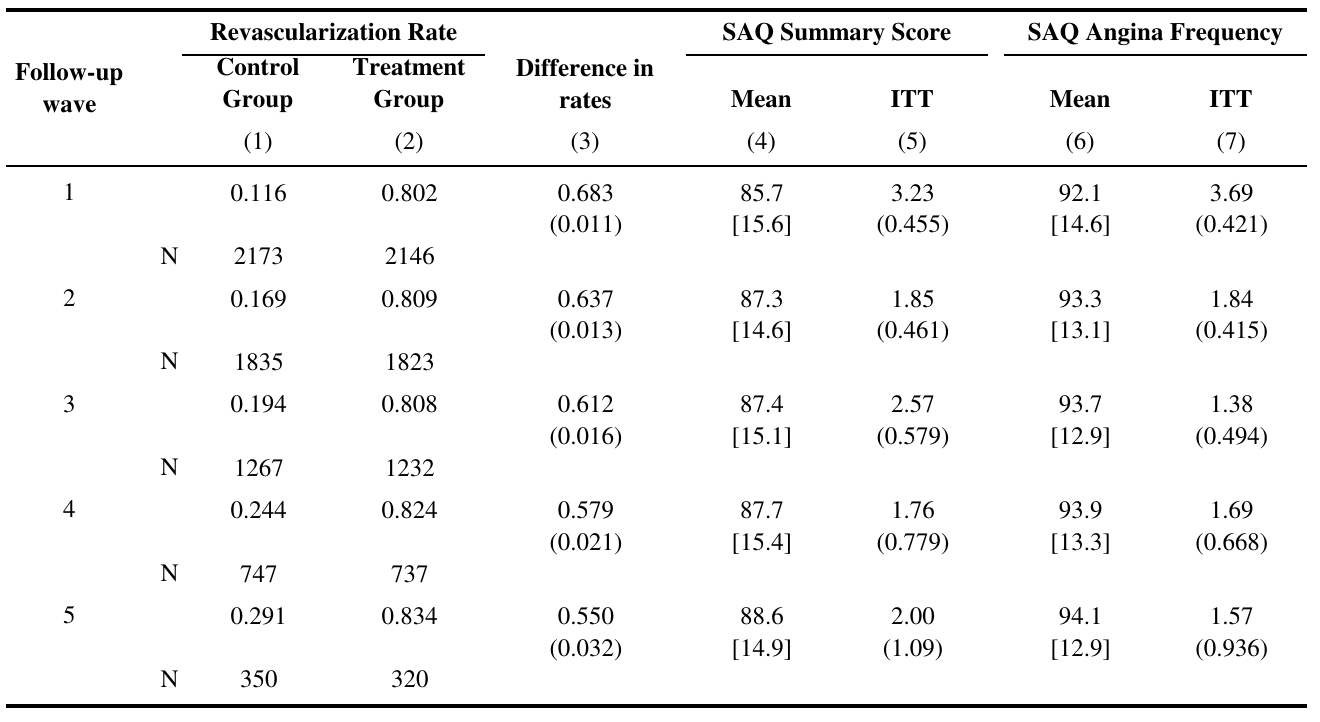}\par}
        \footnotesize \textit{Notes:} Column 1 reports the revascularization rate by wave for patients randomized to the conservative treatment group. Column 2 reports the corresponding revascularization rate for those assigned invasive. Column 3 reports the first-stage effect of treatment assignment on revascularization by wave: this is column 2 minus column 1. Columns 4 and 6 report sample means. Columns 5 and 7 report ITT estimates for the effect of treatment assigned. SAQ scores are measured at the time of follow up. Patients who were deceased or did not complete a follow-up questionnaire are omitted. Controls are the same as in Figure \ref{f:ITTeffects}. Standard deviations appear in square brackets and standard errors appear in parentheses. \par 
    \end{minipage}
\end{table}


\begin{table}[H]
    \caption{2SLS and OLS Estimates of Revascularization Exposure Effects}
    \label{t:hausman}
        \centering
    \begin{minipage}{5.5in}
        {\centering\includegraphics[width=13.5cm]{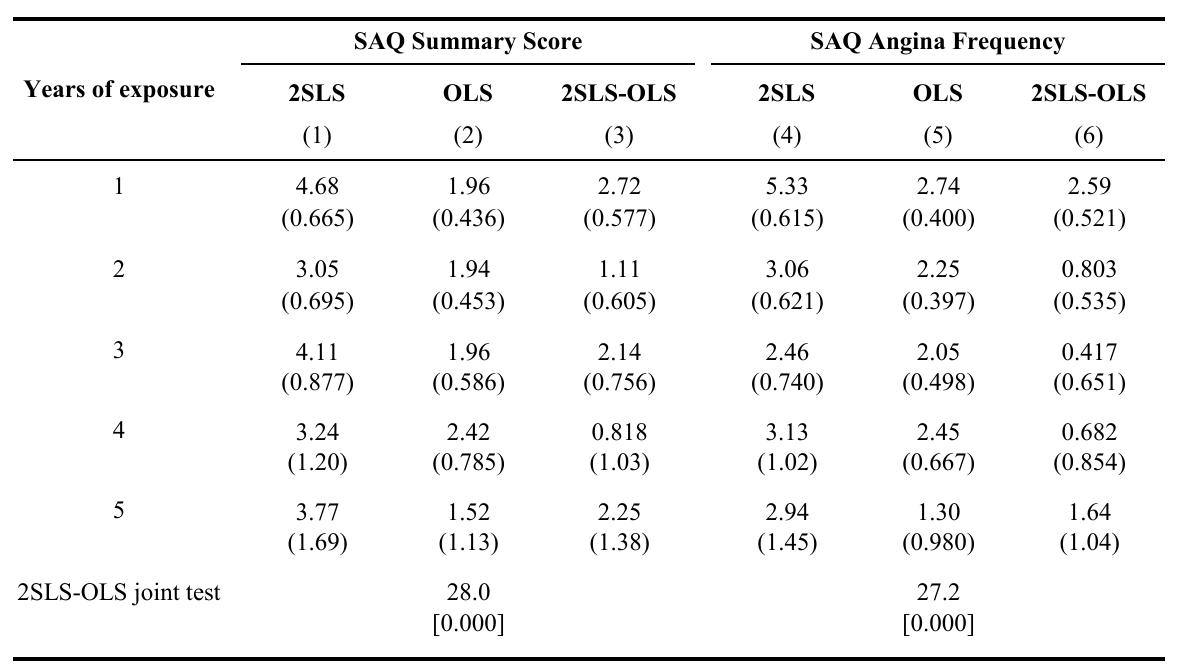}\par}
        \footnotesize \textit{Notes:} This table compares 2SLS and OLS estimates of the effects of 1-5 years of revascularization exposure computed using equation \eqref{eq:Lambda_IV}, stacking data from all waves. Columns 3 and 6 report \cite{hausman1978}-type t-tests for the difference between 2SLS and OLS estimates, where standard errors are computed using the variance of the difference of estimates. Chi-square statistics at the bottom of the table test 2SLS-OLS joint equality. This statistic has a $\chi^2(5)$ distribution under the null. Controls are the same as in Figure \ref{f:ITTeffects}. Standard errors, clustered on person, are reported in parentheses. P-values for joint tests appear in brackets in the last row.\par 
    \end{minipage}
\end{table}

\begin{table}[H]
\caption{Complier and Always-Taker Characteristics}
\label{t:compliers}
\centering
\begin{minipage}{6.2in}
{\centering\includegraphics[width=12cm]{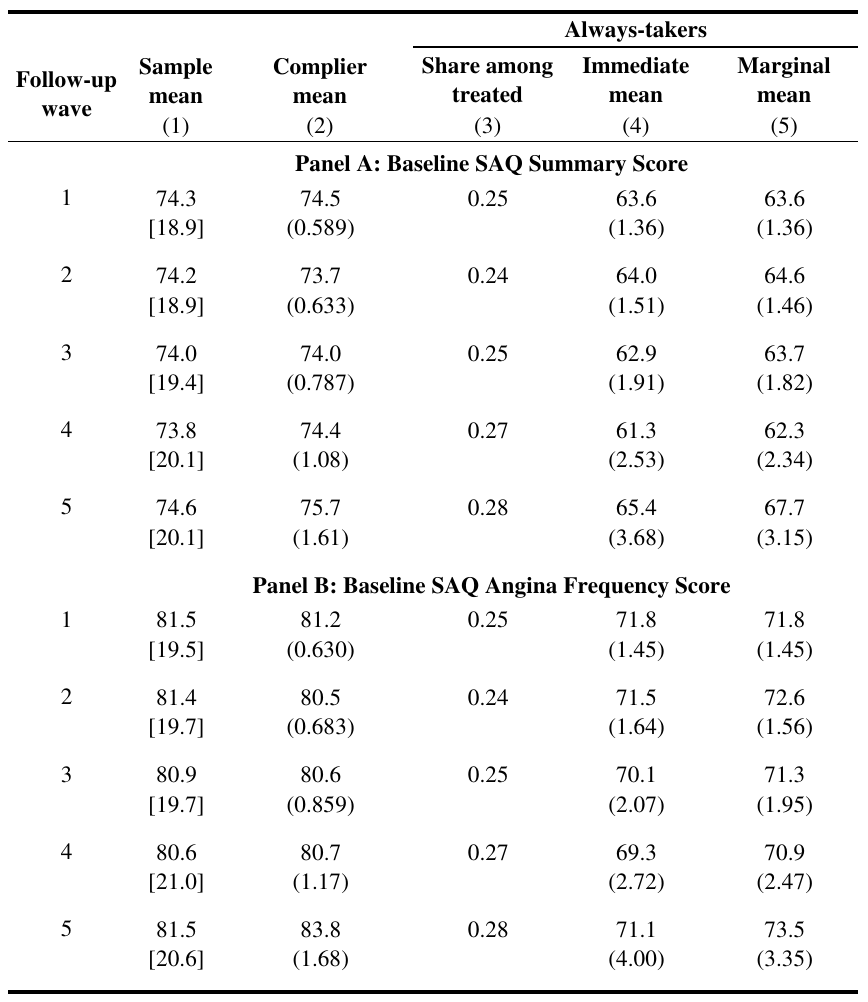}\par}
\footnotesize
\textit{Notes:} This table reports immediate complier and always-taker means obtained using Theorem \ref{thm:late_takers} for baseline summary scores (Panel A) and baseline angina frequency scores (Panel B). Column 1 shows overall sample means. Column 2 reports immediate complier means computed using equation \eqref{eq:immediate_complier_X} for each wave. The share of always-takers among the treated appears in column 3; this is the denominator of $\pi_w$ divided by sample share treated. Column 4 reports immediate always-taker means computed using equation \eqref{eq:immediate_AT_X}; column 5 reports marginal means for always-takers, computed using equation \eqref{eq:AT_X}. Standard deviations appear in square brackets and robust standard errors appear in parentheses. \par
\end{minipage}
\end{table}

\clearpage

\appendix 
\numberwithin{proposition}{section}
\numberwithin{figure}{section}
\numberwithin{table}{section}
\numberwithin{equation}{section}

\begin{center}

    {\Large \bf Appendix: ``Instrumental Variables with Time-Varying Exposure: Dynamic Effects of Revascularization on Quality of Life''}

\end{center}

\section{Proofs}

\subsection{Proof of Theorem \ref{th:causaleffectIMCO}}\label{appendix:causalfxIMCO_proof}
Note that, for any given $w\in\{1,\dots, \bar w\}$, Assumption \ref{assump:independence} implies
\begin{equation*}
\begin{split}
     & E[Y_w \mid Z=1 ] = E\left[\sum_{t=0}^w Y_w(t)\mathbf 1[T_w(1) = t]\right], \\
     & E[Y_w \mid Z=0 ] = E\left[\sum_{t=0}^w Y_w(t)\mathbf1[T_w(0) = t]\right].
\end{split}
\end{equation*}
Assumption \ref{assump:exposure_monotonicity} and IMCO then imply
\begin{equation*}
\begin{split}
    E[Y_w | Z=1 ] - E[Y_w | Z=0 ] 
    &= E\left[\sum_{0\le t< s\le w}(Y_w(s)-Y_w(t))\mathbf1[T_w(1)=s, T_w(0)=t]\right] \\
    & = E \left[ \sum_{t=0}^{w-1}(Y_w(w)-Y_w(t))\mathbf1[T_w(1)=w, T_w(0)=t] \right] \\
    & = P[T_w(1)>T_w(0)]\sum_{t=0}^{w-1}\omega_{wt} E[Y_w(w)-Y_w(t) \mid T_w(1)=w, T_w(0)=t],
\end{split}
\end{equation*}
where $\omega_{wt}\equiv P[T_w(1)=w, T_w(0)=t \mid T_w(1)>T_w(0)]$. Note that IMCO and irreversibility of treatment imply 
\begin{equation*}
    P[T_w(1)>T_w(0)] = P[T_1(1)>T_1(0)] = E[T_1 \mid Z=1] - E[T_1\mid Z=0].
\end{equation*}
That $\sum_{t=0}^{w-1}\omega_{wt}=1$ follows from the above and
\begin{equation*}
    E\left[\sum_{t=0}^{w-1} \mathbf1[T_w(1)=w, T_w(0) =t] \right]
    = E\left[\mathbf1[T_w(1)>T_w(0)] \right]
\end{equation*}
under IMCO. Moreover, for any $t<w$,
\begin{equation*}
    E[\mathbf1[T_w=t] \mid Z=1] - E[\mathbf1[T_w=t] \mid Z=0] = E[\mathbf1[T_w(1)=t] - \mathbf1[T_w(0)=t]],
\end{equation*}
which equals $-P[T_w(1)=w, T_w(0)=t]$ under monotonicity and IMCO. To conclude, note that the definition of the IV estimand in the theorem implies
\begin{equation*}
    \tau_w = \frac{E[Y_w \mid Z=1 ] - E[Y_w \mid Z=0 ]}{E[T_1 \mid Z=1 ] - E[T_1 \mid Z=0 ]}.
\end{equation*}

\subsection{Proof of Theorem \ref{thm:lambda}}\label{appendix:lambda_proof}

As in the proof of Theorem 1 in \cite{angrist_imbens95}, Assumptions \ref{assump:independence}-\ref{assump:exposure_monotonicity} can be used to show that the reduced form for wave-specific IV with outcome $Y_w$ and $T_w$ instrumented by $Z$ for given $w\in\{1,...,\Bar{w}\}$ can be written as
    \begin{equation*}
    \rho_w = \sum_{t=1}^{w} E[Y_w(t)- Y_w(t-1) | T_w(1) \ge t > T_w(0)] \pi_{wt},
    \end{equation*}  
    where $\pi_{wt}\equiv P[T_w(1) \ge t > T_w(0)]$. Under Assumption \ref{assump:wave_ignore}, this becomes:
    \begin{equation*}
    \rho_w = \sum_{t=1}^w \lambda_t\pi_{wt}. 
\end{equation*}
Let $\rho$ denote the vector of $\rho_w$, with $\Pi$ being a lower-triangular matrix with non-zero elements $\pi_{wt}$ ($w$-th row, $t$-th column). Finally, let $\lambda$ be the vector of $\lambda_t$. Then, we can write:
\begin{equation*}
\rho=\Pi\lambda.
\end{equation*}
Under Assumption \ref{assump:relevance},  $\Pi$ is invertible since it is lower-triangular with non-zero diagonal elements (absorbing treatment implies $\pi_{ww}=\pi_{11}$ for all $w$). Hence:
\begin{equation}\label{eq:2slssoln}
\lambda=\Pi^{-1}\rho.
\end{equation}
The 2SLS procedure described in the theorem using equation \eqref{eq:lambda_IV} has reduced form and first stage parameter vectors identical to the stacked wave-by-wave reduced form and first stage described here since the stacked model interacts $Z$ with wave dummies and allows wave-specific intercepts. Hence, these parameters satisfy \eqref{eq:2slssoln}.

\subsection{Proof of Theorem \ref{corollary:Lambda}}\label{appendix:Lambda_proof}
We first show that  $\sum_{i=1}^t \lambda_i = E[Y_w(t)- Y_w(0) \mid T_w(1) \ge t > T_w(0)]$ under Assumption \ref{assump:strong_wave_ignore} (strong wave ignorability), for any $w$ and $t\le w$. Note that for all $i < t$:
\begin{align*}
    \{T_{i}(1) \ge i > T_{i}(0)\} \iff \{T_t(1) \ge t > T_t(0)\},
\end{align*}
since, with an absorbing treatment, those with $t$ months of exposure at wave $t$ must have been exposed initially. 
Hence, for any $i<t$: 
\begin{align*}
    E[Y_{i}(i) - Y_{i}(i-1) | T_{i}(1) \ge i > T_{i}(0)]  = E[Y_i(i)-Y_i(i-1) | T_t(1) \ge t > T_t(0)].
\end{align*}
Moreover, for any $i<t$, strong wave ignorability implies:
\begin{align*}
    &E[Y_{i}(i) - Y_i(i-1) | T_i(1) \ge i > T_i(0)] = E[Y_w(i) - Y_w(i-1) | T_w(1) \ge i > T_w(0)].
\end{align*}
and
\begin{align*}
    &E[Y_i(i) - Y_i(i-1) | T_t(1) \ge t > T_t(0)] = E[Y_w(i) - Y_w(i-1) | T_w(1) \ge t > T_w(0)].
\end{align*}
Thus:
\begin{align*}
    \sum_{i=1}^t \lambda_i & = \sum_{i=1}^t E[Y_w(i) - Y_w(i-1) | T_w(1) \ge i > T_w(0)]\\
    & = \sum_{i=1}^t E[Y_w(i) - Y_w(i-1) | T_w(1) \ge t > T_w(0)] \\
    & = E\left[\sum_{i=1}^tY_w(i) - Y_w(i-1) \middle| T_w(1) \ge t > T_w(0)\right] \\
    &=E[Y_w(t) - Y_w(0) \mid  T_w(1) \ge t > T_w(0)].
\end{align*}

For the second part of the theorem, note that $R_{wt}=\sum_{i=1}^{t}D_{wi}$. Thus, the 2SLS estimand instrumenting $R_{wt}$ in equation \eqref{eq:Lambda_IV} generates the sum of coefficients from the 2SLS estimand instrumenting $D_{wt}$ in equation \eqref{eq:lambda_IV}.  That is,  $\sum_{i=1}^{t}\lambda_i=\Lambda_t$.

\subsection{Marginal Potential Outcome Means}\label{appendix:always_taker_mean_general}

\renewcommand{\thetheorem}{\Alph{section}\arabic{theorem}}
\setcounter{theorem}{0} 

This appendix extends Theorem \ref{thm:late_takers} to identify expectations of potential outcomes for different types of compliers and always-takers. Theorem \ref{thm:late_takers} in the text, which establishes identification of baseline covariate means in our dynamic setup, is a consequence of this more general result.  In particular, Theorem \ref{thm:late_takers} follows from the theorem below by setting $Y_w(t)=X$ for all $w$ and $t$.

\begin{theorem}\label{thm:late_takers_general}
    Suppose Assumptions \ref{assump:independence}-\ref{assump:relevance} hold. Then: \
\begin{itemize} 
    \item[i.] \emph{Immediate complier means} are given by
    \begin{align}
    \label{eq:immediate_complier_X_general}
    E[Y_w(w) \mid T_1(1)>T_1(0)] 
    =\frac{E[\mathbf{1}[T_1=1]\times Y_w\mid Z=1] - E[\mathbf{1}[T_1=1]\times Y_w\mid Z=0]}{E[\mathbf{1}[T_1=1]\mid Z=1] - E[\mathbf{1}[T_1=1]\mid Z=0]}.
    \end{align}
    \item[ii.] \emph{Immediate always-taker means} are given by
    \begin{equation}
    \label{eq:immediate_AT_X_general}
    E[Y_w(w) \mid T_1(1)=T_1(0)=1]=E[ Y_w \mid T_1=1, Z=0].
    \end{equation}
\end{itemize}
\noindent Moreover, if Assumption \ref{assump:imco} also holds, we have that:
\begin{itemize}
\setcounter{enumi}{2}
\item[iii.] Complier means equal immediate complier means in each wave:
    \begin{equation}
    \label{eq:latercomplier_ag_X_general}
    E[Y_w(w) \mid T_w(1)>T_w(0)]=E[Y_w(w)\mid T_1(1)>T_1(0)]
    \end{equation}
    for all $w\in\{1,...,\Bar{w}\}$;
    \item[iv.] For each $w>1$ and $t\in\{0,\dots,w-1\}$ such that
    \begin{equation*}
        E[\mathbf{1}[T_w=t]\mid Z=1]-E[\mathbf{1}[T_w=t]\mid Z=0]\neq 0,
    \end{equation*}
    \emph{disaggregated complier means} are given by
    \begin{equation}
    \label{eq:latercomplier_dsg_X_general}
    E[Y_w(t) \mid T_w(1)=w, T_w(0)=t]=
    \frac{E[\mathbf{1}[T_w=t]\times Y_w\mid Z=1]-E[\mathbf{1}[T_w=t]\times Y_w\mid Z=0]}{E[\mathbf{1}[T_w=t]\mid Z=1]-E[\mathbf{1}[T_w=t]\mid Z=0]};
    \end{equation}
    \item[v.] For each $w>1$, \emph{later always-taker means} are given by
    \begin{equation}
    \label{eq:laterATs_X_general}
    E[Y_w(T_w(1))\mid w>T_w(1)=T_w(0)\ge 1]
    = E[Y_w\mid w>T_w\ge1, Z=1].
    \end{equation}
    \emph{Marginal always-takers means}, which average immediate and later always-takers, can be obtained using
    \begin{equation}
    \label{eq:AT_X_general}
    \begin{split}
    & E[Y_w(T_w(1))\mid T_w(1)=T_w(0)\ge 1] \\
    &= \pi_w E[ Y_w \mid T_w=w, Z=0]
    + (1-\pi_w)E[Y_w\mid w>T_w\ge1, Z=1],
    \end{split}
    \end{equation}
    where $\pi_w\equiv\frac{E[\mathbf{1}[T_w=w]\mid Z=0]}{E[\mathbf{1}[T_w=w]\mid Z=0] + E[\mathbf{1}[1\le T_w<w]\mid Z=1]}$.
    \item[vi.] For each each $t\in\{0,\dots,w-1\}$, \emph{disaggregated always- and never-taker means} are given by
    \begin{equation}
    \label{eq:laterATs_dsg_X_general}
    E[Y_w(t)\mid T_w(1)=T_w(0)=t]=
    E[ Y_w\mid T_w=t, Z=1].
    \end{equation}
    For never-takers, set $t=0$. Identification of never-taker means does not require IMCO (Assumption \ref{assump:imco}). Finally, for any $w\in\{1,\ldots,\bar w\}$, \emph{immediate never-taker means} are
    \begin{equation}
    \label{eq:immediate_NT_X_general}
    E[Y_w \mid T_1=0, Z=1] = E[Y_w \mid 0\le T_w<w, Z=1] = \sum_{t=0}^{w-1}\tilde\omega_{wt} E[Y_w(t) \mid T_w(1)=T_w(0)=t],
    \end{equation}
    where $\tilde\omega_{wt}\equiv P[T_w(1)=T_w(0)=t\mid 0\le T_w<w]$ for all $t\in\{0,\ldots, w-1\}$.
\end{itemize}
\end{theorem}
\begin{proof}
Consider the first two results, \eqref{eq:immediate_complier_X_general} and \eqref{eq:immediate_AT_X_general}, which do not require Assumption \ref{assump:imco}. Fix $w\in\{1,...,\Bar{w}\}$. Because treatment is irreversible, $T_w=w$ if, and only if, $T_1=1$. Monotonicity then implies:
\begin{equation*}
\begin{split}
E[\mathbf{1}[T_1=1]\times Y_w\mid Z=1]
=&
E[\mathbf{1}[T_1(1)=1]\times Y_w(w)]\\
=&
P[T_1(1)=T_1(0)=1]E[ Y_w(w)\mid T_1(1)=T_1(0)=1]\\
&+
P[T_1(1)>T_1(0)]E[ Y_w(w)\mid T_1(1)>T_1(0)].
\end{split}
\end{equation*}
Analogously, $E[\mathbf{1}[T_1=1]\times Y_w\mid Z=0]=P[T_1(1)=T_1(0)=1]E[ Y_w(w)\mid T_1(1)=T_1(0)=1]$, which establishes equation \eqref{eq:immediate_complier_X_general}. Equation \eqref{eq:immediate_AT_X_general} follows from $E[Y_w\mid T_1=1, Z=0]=E[Y_w(w)\mid T_1(0)=1]=E[Y_w(w)\mid T_1(1)=T_1(0)=1]$ under monotonicity. 

Now, for a given $w\in\{2,...,\Bar{w}\}$, consider results in the theorem that depend on Assumption \ref{assump:imco}:
\begin{itemize}
     \item[1.] To establish \eqref{eq:latercomplier_ag_X_general}, note that because treatment is irreversible, $T_1(1)>T_1(0)$ implies $T_w(1)>T_w(0)$. Conversely, $T_w(1)>T_w(0)$ implies $T_w(1)=w$ almost surely, which in turn implies $T_1(1)>T_1(0)$ because treatment is irreversible.
     
     \item[2.] Fix $t\in\{0,...,w-1\}$. Note that when $Z=1$, $\mathbf{1}[T_w=t]=1$ if and only if $T_w(1)=t$. Therefore, under Assumptions \ref{assump:exposure_monotonicity} and \ref{assump:imco}, $E[\mathbf{1}[T_w=t]\mid Z=1]=P[T_w(1)=T_w(0)=t\mid Z=1]$ since compliers have $T_w(1)=w$ almost surely. Moreover, under monotonicity, 
    \begin{equation}
    \label{eq:laterAT_dsg_proof}
    E[\mathbf{1}[T_w=t]\times Y_w\mid Z=1] = P[T_w(1)=T_w(0)=t]E[Y_w(t)\mid T_w(1)=T_w(0)=t].
    \end{equation}
    When $Z=0$, in addition to always-takers with $T_w(0)=t$, $\mathbf{1}[T_w=t]=1$ for compliers with $T_w(0)=t<w=T_w(1)$. Thus,
    \begin{equation*}
    \begin{split}
    E[\mathbf{1}[T_w=t]\times Y_w\mid Z=0] =&
    P[T_w(1)=T_w(0)=t]E[Y_w(t)\mid T_w(1)=T_w(0)=t] \\
    &+
    P[T_w(1)=w, T_w(0)=t]E[Y_w(t)\mid T_w(1)=w, T_w(0)=t],
    \end{split}
    \end{equation*}
    which together with \eqref{eq:laterAT_dsg_proof} establishes  \eqref{eq:latercomplier_dsg_X_general}. Equation \eqref{eq:laterATs_dsg_X_general} follows from the same argument used to establish \eqref{eq:laterAT_dsg_proof}.
    
    \item[3.] To show \eqref{eq:laterATs_X_general}, note that monotonicity implies
    \begin{equation*}
    \begin{split}
    E[Y_w \mid w>T_w\ge1, Z=1]
    &=
    E[Y_w(T_w(1)) \mid w>T_w(1)\ge1]\\
    &=
    E[Y_w(T_w(1)) \mid w>T_w(1)=T_w(0)\ge 1],
    \end{split}
    \end{equation*}
    since $P[T_w(1)=T_w(0) \mid 1\le T_w(1)<w ]=1$ under Assumption \ref{assump:imco}.

    \item[4.] To show \eqref{eq:AT_X_general}, note that 
    \begin{equation*}
    \begin{split}
    &E[Y_w(T_w(1))\mid  T_w(1)=T_w(0)\ge 1] \\
    =& P[T_w(1)=T_w(0)=w\mid T_w(1)=T_w(0)\ge 1]E[ Y_w(T_w(1)) \mid T_w(1)=T_w(0)=w] \\
    &+
    P[w>T_w(1)=T_w(0)\ge1\mid T_w(1)=T_w(0)\ge 1]E[ Y_w(T_w(1)) \mid w>T_w(1)=T_w(0)\ge1].
    \end{split}
    \end{equation*}
    The fact that $E[ Y_w(T_w(1)) \mid T_w(1)=T_w(0)=w]=E[Y_w\mid T_w=w, Z=0]$ follows from equation \eqref{eq:immediate_AT_X_general} by noting that $T_1(1)=T_1(0)=1$ if, and only if, $T_w(1)=T_w(0)=w$ because treatment is irreversible. The result that $E[ Y_w(T_w(1)) \mid w>T_w(1)=T_w(0)\ge1]=E[X_w \mid w>T_w\ge1, Z=1]$ follows from \eqref{eq:laterATs_X_general}. Finally, monotonicity implies $E[\mathbf{1}[T_w=w]\mid Z=0]=P[T_w(1)=T_w(0)=w]$ and, under IMCO, $E[\mathbf{1}[1\le T_w<w]\mid Z=1]=P[w>T_w(1)=T_w(0)\ge1]$. Thus, $\pi_w=P[T_w(1)=T_w(0)=w\mid T_w(1)=T_w(0)\ge 1]$.

    Note that equation \eqref{eq:AT_X} uses $E[X\mid T_1=1, Z=0]$ while equation \eqref{eq:AT_X_general} uses $E[Y_w\mid T_w=w, Z=0]$. Because treatment is irreversible, conditioning on $T_1=1$ is equivalent to condition on $T_w=w$. In the case of equation \eqref{eq:AT_X}, because $X$ does not vary across waves, this means that $E[X\mid T_1=1, Z=0]$ can be computed only once even though $E[X\mid T_w(1)=T_w(0)\ge1]$ varies with $w$. On the other hand, in equation \eqref{eq:AT_X_general}, $E[Y_w\mid T_w=w, Z=0]$ varies with $w$ even though the latent group does not. 

    \item[5.] To conclude, fix $w\in\{1,\ldots,\bar w\}$. The first equality in \eqref{eq:immediate_NT_X_general} is immediate from irreversibility of treatment. The second equality then follows from 
    \begin{equation*}
        E[Y_w\mid 0\le T_w<w, Z=1]
        =
        \sum_{t=0}^{w-1}P[T_w=t\mid 0\le T_w<w, Z=1]E[Y_w \mid T_w=t, Z=1].
    \end{equation*}
    From \eqref{eq:laterATs_dsg_X_general}, $E[Y_w \mid T_w=t, Z=1]$ equals $E[Y_w(t)\mid T_w(1)=T_w(0)=t]$ for all $t\in\{0,\ldots, w-1\}$. Lastly, for any $t\in\{0,\ldots, w-1\}$, Assumption \ref{assump:independence} implies
    \begin{equation*}
        P[T_w=t\mid 0\le T_w<w, Z=1] = P[T_w(1) = t\mid 0\le T_w<w],
    \end{equation*}
    which equals $P[T_w(1)=T_w(0)=t\mid 0\le T_w<w]$ under IMCO.
    
\end{itemize}
\end{proof}

\section{Supplemental Material}

\subsection{Comparison to \cite{Bowden2025}}
\label{appendix: bowden}
This appendix compares our identification strategy to the use of instruments in Section 2.3 of \cite{Bowden2025}.
\cite{Bowden2025} is more restrictive in that it does not allow for control-group crossovers; in our notation, this means $T_w(0)=0$ with probability one. A second difference is that \cite{Bowden2025} impose a parametric specification for the dynamic incremental effects $\lambda_t$. In our notation, this amounts to imposing the restriction: 
\begin{equation}
\label{eq:bowden_geometric}
    \lambda_t = \beta \alpha^t,
\end{equation}
for $\beta,\alpha\in\mathbb{R}$, with $\alpha$ implicitly assumed to be non-negative (see equation (6) and related discussions in \cite{Bowden2025}).\footnote{Note that the definition of the causal effects $\beta_k(j)$ in their Section 2.1 restricts potential outcome heterogeneity across IV latent types.} This parametric restriction on the incremental treatment effects implies our wave ignorability assumption (Assumption \ref{assump:wave_ignore}) but is much stronger: for example, with $\alpha\ge 0$, it doesn't allow for non-monotonic dynamics of incremental effects or for cumulative treatment effects that are initially positive and then decrease over time. Even with unrestricted $\alpha$, Equation \eqref{eq:bowden_geometric} rules out several plausible paths for treatment effects---such as when effects persist for more than one period before decaying (see Appendix Figure \ref{f:bowden}).

We now extend the \cite{Bowden2025} framework to allow for control-group crossovers and show how their parametric model for incremental effects can be imposed and tested in our setting. Using the notation of Theorem \ref{thm:lambda}, the estimator \cite{Bowden2025} propose is a generalized method of moments (GMM) estimator of $\beta$ and $\alpha$ based on the moment condition:
\begin{equation}
\label{eq:bowden_momentconditions}
    E\left[(Z - E[Z])
      \begin{bmatrix}  Y_1 - \beta D_{11} Z \\
      Y_2 - (\beta\alpha D_{22} + \beta D_{21})Z \\
      \vdots \\
      Y_{\bar w} - \sum_{j=1}^{\bar w}D_{wj}\beta\alpha^{j - 1}Z
      \end{bmatrix}\right] = 0.
\end{equation}
We propose a modified estimator which replaces each $D_{wt}Z$ term with just $D_{wt}$:
\begin{equation}
\label{eq:bowden_ischemia}
    E\left[(Z - E[Z])
      \begin{bmatrix}  Y_1 - \beta D_{11} \\
      Y_2 - \beta\alpha D_{22} - \beta D_{21} \\
      \vdots \\
      Y_{\bar w} - \sum_{j=1}^{\bar w}D_{wj}\beta\alpha^{j - 1}
      \end{bmatrix}\right]= 0.
\end{equation}
This modification is inconsequential in settings without control-group crossovers: $Z=0$ implies $D_{wt}=0$ in that case, such that $D_{wt}Z=D_{wt}$. In the more general setting we consider, an estimator based on equation \eqref{eq:bowden_ischemia} can be seen as a restricted version of the 2SLS estimator in Theorem \ref{thm:lambda} which imposes the parametric restriction \eqref{eq:bowden_geometric} on incremental effects. That is, if one were to relax the equation \eqref{eq:bowden_geometric} assumption by replacing $\beta\alpha^t$ with unrestricted $\lambda_t$ coefficients in equation \eqref{eq:bowden_ischemia}, the resulting estimates would coincide exactly with the 2SLS estimates in Theorem \ref{thm:lambda}.\footnote{Since $Z$ is binary, the moment conditions $E[\eta_w]=E[Z\eta_w]=0$, $w=1,\dots,\bar w,$ define the 2SLS estimand in Theorem \ref{thm:lambda}. For wave $w$, these imply $\mu + \alpha_w = E[Y_w-\sum_{t=1}^w\lambda_t D_{wt}]$ and $E[(Z-E[Z])\eta_w]=0$. Therefore, they imply $E[(Z-E[Z])(\eta_w - \mu - \alpha_w)]=0$; stacking across waves gives \eqref{eq:bowden_ischemia}. The relevance condition (Assumption \ref{assump:relevance}) implies that the solutions to the 2SLS moment condition and \eqref{eq:bowden_ischemia} are unique and hence coincide.}

Table \ref{t:bowden} reports GMM estimates of $\beta$ and $\alpha$ for the two primary ISCHEMIA outcomes, using the extended moment condition \eqref{eq:bowden_ischemia} after adding linear controls as in the main analysis. Since $\beta$ and $\alpha$ are overidentified in this setting, we use the efficiently-weighted two-step GMM estimator and report overidentification test statistics and p-values \citep{hansenGMM}. Columns 1-3 of the table estimate a version of the model that is faithful to the original \cite{Bowden2025} specification, with $\alpha\ge 0$, while columns 4-6 allow $\alpha$ to be negative.\footnote{In practice, to impose $\alpha\ge 0$, we first minimize the GMM criterion function allowing $\alpha$ to take only strictly positive values, which drives it to zero. Then, we impose $\alpha=0$ and choose $\beta$ to minimize the GMM criterion under such constraint. Columns 1-3 report results from this second step.} Columns 1 and 4 pool the two ISCHEMIA outcomes in order to increase precision.

The table shows that the augmented \cite{Bowden2025} model of incremental effects does not fit the ISCHEMIA data well under natural parameter values. In columns 1-3 the estimate of $\alpha$ is driven to the lower bound of zero and overidentification tests largely reject---with $p<0.01$ in the pooled model and the SAQ Angina Frequency outcome. Even when allowing negative $\alpha$ in columns 4-6, the pooled model yields a marginal overidentification test rejection. Broadly, these results highlight the value of our nonparametric approach.

\subsection{Latent-Index Representation of Monotonicity}\label{appendix:vytlacil}

This appendix shows that Assumption \ref{assump:exposure_monotonicity} is implied by a nonparametric latent-index model. The argument here is similar to that in \citet{vytlacil2002} for a single static treatment. Let
\[
S_i(z)=\inf\{s\in\{1,\ldots,\bar w\}:T_{is}(z)>0\},
\]
with $\inf\emptyset=\infty$. Since revascularization is absorbing,
\[
T_{iw}(z)=\sum_{s=1}^w \mathbf{1}\{S_i(z)\le s\}.
\]

\begin{proposition} The following two statements
are equivalent:
\[
T_{iw}(1)\ge T_{iw}(0)
\qquad\text{for every } w=1,\ldots,\bar w,
\]
and
\[
S_i(1)\le S_i(0).
\]
Moreover, this condition is equivalent to the existence of a scalar
assignment index \(\nu(z)\), normalized as \(\nu(0)=0\) and \(\nu(1)=1\),
and patient-specific random thresholds
\[
C_{i1}\ge C_{i2}\ge\cdots\ge C_{i\bar w}
\]
such that
\[
\mathbf{1}\{S_i(z)\le s\}
=
\mathbf{1}\{\nu(z)\ge C_{is}\},
\qquad s=1,\ldots,\bar w,\quad z\in\{0,1\}.
\]
Equivalently,
\[
T_{iw}(z)
=
\sum_{s=1}^w \mathbf{1}\{\nu(z)\ge C_{is}\}.
\]
\end{proposition}

\begin{proof}
First, suppose \(S_i(1)\le S_i(0)\). Then, for every \(s\),
\[
\mathbf{1}\{S_i(1)\le s\}
\ge
\mathbf{1}\{S_i(0)\le s\}.
\]
Summing from \(s=1\) to \(w\) gives
\[
T_{iw}(1)
=
\sum_{s=1}^w \mathbf{1}\{S_i(1)\le s\}
\ge
\sum_{s=1}^w \mathbf{1}\{S_i(0)\le s\}
=
T_{iw}(0).
\]
Conversely, suppose \(S_i(1)>S_i(0)\). At \(w=S_i(0)\), we have
\(T_{iw}(0)>0\) but \(T_{iw}(1)=0\), contradicting
\(T_{iw}(1)\ge T_{iw}(0)\). Therefore
\[
T_{iw}(1)\ge T_{iw}(0)\ \text{for all }w
\qquad\Longleftrightarrow\qquad
S_i(1)\le S_i(0).
\]

It remains to establish the latent-index representation. If the
latent-index representation holds, monotonicity follows immediately because
\(\nu(1)\ge \nu(0)\):
\[
\mathbf{1}\{\nu(1)\ge C_{is}\}
\ge
\mathbf{1}\{\nu(0)\ge C_{is}\}
\]
for every \(s\). Summing over \(s\le w\) yields
\(T_{iw}(1)\ge T_{iw}(0)\).

For the converse, assume monotonicity, or equivalently \(S_i(1)\le S_i(0)\).
Normalize \(\nu(0)=0\) and \(\nu(1)=1\). Define
\[
C_{is}=
\begin{cases}
2, & s<S_i(1),\\
1/2, & S_i(1)\le s<S_i(0),\\
-1, & s\ge S_i(0).
\end{cases}
\]
These thresholds are weakly decreasing in \(s\). Moreover,
\[
\mathbf{1}\{1\ge C_{is}\}
=
\mathbf{1}\{S_i(1)\le s\},
\]
and
\[
\mathbf{1}\{0\ge C_{is}\}
=
\mathbf{1}\{S_i(0)\le s\}.
\]
Thus,
\[
\mathbf{1}\{S_i(z)\le s\}
=
\mathbf{1}\{\nu(z)\ge C_{is}\},
\qquad z\in\{0,1\}.
\]
This establishes the result.
\end{proof}

\subsection{Semiparametric Efficiency}\label{sec:efficiency}

This appendix shows that the sample-analog 2SLS estimator is semiparametrically efficient for the 2SLS estimand in Theorem \ref{thm:lambda}. Proofs of analogous results for Theorems \ref{th:causaleffectIMCO} and \ref{corollary:Lambda} follow by the same steps. The results stem from the fact that the estimands in Theorem \ref{th:causaleffectIMCO}-\ref{corollary:Lambda} are defined by moment conditions whose number equals the number of parameters, resulting in a just-identified Generalized Method of Moments (GMM) model for the distribution of data. \cite{chensantos_overid} show that when the Jacobian matrix in GMM  estimation is full-rank the notion of just-identification in GMM coincides, under mild conditions, with a broader notion of just-identification that implies all regular estimators are asymptotically equivalent (see their GMM Illustration Example and Theorem 3.1). What follows specializes this more general efficiency result to our setting.

 Let $X$ be the random vector stacking instrument, outcomes, and treatment status for all waves:  $$X\equiv(Z,Y_{1},\ldots,Y_{\bar w},T_{1},\ldots,T_{\bar w})^\prime.$$ For simplicity, when dealing with derivatives of functionals over probability measures, we assume that the distribution of $Y\equiv (Y_1,\ldots, Y_{\bar w})^\prime$ has bounded support (which also guarantees 2SLS is regular, per \cite{chensantos_overid}, Remark 3.1). This and the other support restrictions for the instrument (binary) and treatment (integer-valued and irreversible) are encoded in the convention that $X$ takes values in a set $\mathcal X$. We then let $\mathcal P$ be the set of probability measures over $\mathcal X$ such that $0<E_P[Z]<1$ and $E_P[T_1\mid Z=1]-E_P[T_1\mid Z=0]>0$, where, in this Appendix, we index expectations with the probability measure. 

For $w\in\{1,\ldots,\bar w\}$ and $t\le w$, define
$D_{wt}\equiv\mathbf 1[T_w\ge t]$. Let $\mathbf D$ denote the random
$\bar w\times \bar w$ lower-triangular  matrix where entry $(w,t)$ is equal to $D_{wt}$ for $t\neq w$ and zero for $t>w$. Finally, for a probability measure $P\in\mathcal P$, let
$$\rho(P) \equiv E_P[Y\mid Z=1]-E_P[Y\mid Z=0],
\qquad
\Pi(P) \equiv E_P[\mathbf D\mid Z=1]-E_P[\mathbf D\mid Z=0].$$
Since $\Pi(P)$ is lower triangular with a non-zero diagonal, it is invertible for any $P\in\mathcal P$. By Appendix \ref{appendix:lambda_proof}, the functional $$P\mapsto \lambda(P) \equiv \Pi(P)^{-1}\rho(P)$$ identifies the target parameter in Theorem \ref{thm:lambda}. 

We consider efficient estimation of this functional when defined on $\mathcal P$. The next proposition establishes that the asymptotic variance of any regular estimator of $P\mapsto \lambda(P)$ is no smaller than that of the 2SLS estimator.

\begin{proposition}
    Consider the set of probability measures $\mathcal P$ and the functional $\lambda:\mathcal P\to \mathbb{R}^{\bar w}$ defined above. Then the asymptotic variance matrix of any regular sequence of estimators of $\lambda$ (as defined in \cite{van2000asymptotic}, Chapter 25)  is bounded from below, in the positive semi-definite order, by the asymptotic variance matrix of the 2SLS estimator.
\end{proposition}
\begin{proof}
    Fix $P\in\mathcal P$. The tangent space of $\mathcal P$ at $P$ is the set of real-valued functions that are mean-zero and square-integrable with respect to $P$, $L_0^2(P)$. To see this, proceed as in Example 25.16 of \cite{van2000asymptotic}: fix $g\in L^2_0(P)$ to construct a one-dimensional parametric submodel $(P_t)_{t\in[0,1]}$ such that 
    \begin{equation*}
    t\mapsto \frac{dP_t}{dP} = \frac{
    \psi(t g)
    }{
    \int \psi(t g(x))dP(x)
    },
    \quad 
    \mathbb{R}\ni u\mapsto \psi(u)\equiv \frac{2}{1+\exp(-2u)},
    \end{equation*}
    where $dP_t/dP$ is the Radon-Nykodim derivative of $P_t$ with respect to $P$. This parametric submodel is in $\mathcal P$ for sufficiently small $t\in[0,1]$: since $u\mapsto\psi(u)$ is bounded, the Dominated Convergence Theorem implies that $\int|dP_t - dP|\to0$ as $t\downarrow0$ and, therefore, that 
    \begin{equation*}
        \big|E_{P_t}[\mathbf1[Z=z]] - E_P[\mathbf1[Z=z]]\big|\to0, \quad \big|E_{P_t}[\mathbf1[Z=z]\, T_1]- E_P[\mathbf1[Z=z]\, T_1]\big|\to0
    \end{equation*}
    for $z\in\{0,1\}$ as $t\downarrow0$. Moreover, the submodel $(P_t)_{t\in[0,1]}$ is differentiable in quadratic mean at $P=P_0$ (i.e., it satisfies (25.13) in \cite{van2000asymptotic}) as per Example 25.16 in \cite{van2000asymptotic}. This shows that the tangent space of $\mathcal P$ at $P$ is indeed $L^2_0(P)$.

    Now fix an arbitrary one-dimensional parametric submodel $(P_t)_{t\in[0,1]}$ in $\mathcal P$ such that $P_0=P$ and $(P_t)_{t\in[0,1]}$ is differentiable in quadratic mean at $P$ with score function $g\in L^2_0(P)$. Differentiability in quadratic mean and the Cauchy-Schwarz inequality imply 
    \begin{equation*}
    \begin{split}
        \left\Vert \frac{dP_t - dP}{t}-g\,dP \right\Vert_{1}
        \le &
        \left\Vert \frac{g}{2}\,dP^{1/2}\right\Vert_{2}  
        \left\Vert dP_t^{1/2} - dP^{1/2}\right\Vert_{2}\\
        &+
        \left\Vert \frac{dP_t^{1/2} - dP^{1/2}}{t} -\frac{g}{2}\,dP^{1/2}\right\Vert_{2}  
        \left\Vert dP_t^{1/2} + dP^{1/2}\right\Vert_{2} =o(1)
    \end{split}
    \end{equation*}
    as $t\downarrow0$. Since $Y$ is bounded, the above display implies
    \begin{equation*}
        \lim_{t\downarrow 0} \frac{\lambda(P_t) - \lambda(P)}{t} = 
        E_P[\phi_\lambda(X) g(X)],
    \end{equation*}
    where 
    \begin{equation*}
    \begin{split}
        \phi_\lambda(X)
        &\equiv \Pi(P)^{-1}\left[
        \frac{Z}{p}
            \left\{ Y-\mu_1^Y-(\mathbf D-\mu_1^D)\lambda(P) \right\}
        -
            \frac{1-Z}{1-p}
            \left\{ Y-\mu_0^Y-(\mathbf D-\mu_0^D)\lambda(P) \right\}\right] \\
         & =\text{Cov}(Z, \mathbf D)^{-1}\left[(Z-p)( Y - \mathbf D\lambda(P) - E_P[Y -\mathbf{D}\lambda(P)])   \right].   
    \end{split}
    \end{equation*}
    and 
    $$p\equiv P[Z=1], \qquad \mu_z^Y\equiv E_P[Y\mid Z=z], \qquad \mu_z^D\equiv E_P[\mathbf D\mid Z=z].$$
    Because $\phi_\lambda\in L^2(P)$, the map $g\mapsto E_P[\phi_\lambda(X) g(X)]$ defined on $L^2(P)$ is linear and continuous. Therefore, Theorem 25.20 in \cite{van2000asymptotic} implies that $\text{Var}_P[\phi_\lambda(X)]$ is the lower bound for the variance matrix of the asymptotic variance of every regular sequence of estimators.  
    
    To conclude the proof, note that for an i.i.d.\ sample $(X_i)_{i=1}^n$ from the probability measure $P$, the 2SLS estimator has the asymptotically linear expansion
    \begin{equation*}
        \sqrt{n}[\hat\lambda - \lambda(P)] =  \frac{1}{\sqrt{n}}\sum_{i=1}^n \phi_\lambda(X_i) + o_p(1)
    \end{equation*}
    as $n\to\infty$, since $Y$ is bounded.
\end{proof}

\subsection{Finite-sample Behavior of Alternative Estimators}
\label{appendix:simulations}

This appendix reports results of a small-scale simulation study, tailored to the ISCHEMIA data, that illustrates the finite-sample performance of the 2SLS estimator in Theorem \ref{corollary:Lambda} and compares it to OLS and a g-method estimator. The target parameters are cumulative causal effects, set equal to main text estimates reported in Figure \ref{f:stacked}. For simplicity and to ease comparison across estimators, the simulation sets these as constant across observations. 

We simulate an unbalanced panel with 5 waves, mirroring the data structure of ISCHEMIA. We vary the number of initial observations across 500, 1,500, and 5,000. Inclusion in the sample at wave $w$ is determined by $$M_{w}=M_{w-1}Q_{w}, \quad Q_{w}\sim \text{Bernoulli}(q_w/q_{w-1}), \quad w=2,\ldots,5,$$ for $ M_{1}\equiv 1$, $q_1\equiv 1,$ and $(q_{w})_{w=2}^5$ chosen to mimic drop rates in ISCHEMIA. 

Each observation has a baseline outcome, $Y_{0}$, drawn with replacement from the sample of baseline SAQ summary scores in the ISCHEMIA data. We also construct a standard normal counterpart to it, denoted $y_0$.\footnote{Specifically, in each simulation, each observation is drawn with a latent baseline rank $V\sim\mathrm{Unif}[0,1]$. We generate the baseline outcome as $Y_0=\widehat F^{-1}(V)$, where $\widehat F^{-1}$ denotes the empirical quantile function of baseline SAQ summary score in ISCHEMIA. We then define $y_0=\Phi^{-1}(V)$ where $\Phi(\cdot)$ is the standard normal CDF.} The untreated potential outcome path follows: 
\begin{equation*}
    Y_{w}(0)=Y_{0}+a_w+u_{w}, \quad w=1,\ldots,5,
\end{equation*}
where $(a_w)_{w=1}^5 = (-0.5, -1, -1.5, -2, -2.5)$ is a downward deterministic trend. To induce serially correlated health shocks, the error term $u_{w}$ follows an autoregressive model:
\begin{equation*}
    u_{w}=\varphi u_{w-1}+\sigma\sqrt{1-\varphi^2}\tilde u_{w},
    \quad 
    u_{1}= \sigma \tilde u_{1},
    \quad
    \tilde u_{w}\overset{iid}{\sim} \mathcal N(0,1),
\end{equation*}
Here $\varphi=0.6$ is the persistence parameter, and $\sigma=20$ is the marginal standard deviation of $u_{iw}$. 

Next, define a latent ``prognosis index'' that determines nonrandom selection into the treatment: $$H = -\frac{\sqrt{5}}{5}\sum_{w=1}^5 \tilde u_{w}.$$ In particular, for $\beta_H=0.25,$ $\beta_y=0.5$, and $$U=\Phi\left(\beta_H H + \beta_y y_0+  [1-\beta_H^2 - \beta_y^2]^{\frac{1}{2}}\eta\right), \quad \eta\sim \mathcal N(0,1) \implies U\sim \text{Unif}[0,1],$$ the potential wave of treatment takeup when assigned to $z\in\{0,1\}$, $S(z)\in\{1,2,3,4,5,\infty\}$, is:
\begin{equation*}
     S(z)=\inf\left\{
        w\in\{1,\ldots,5\}: U\ge 1-p_{zw}
        \right\},
\end{equation*}
where $\inf\emptyset \equiv \infty$ and $p_{0w}, \, p_{1w}$ approximate empirical cumulative revascularization rates in the conservative and invasive arms in wave $w$, respectively. Exposure at wave $w$ equals $T_{w}=\max\{0,w-S+1\}$ and write $R_{wt}\equiv \mathbf1[T_w=t]$.
We then set
\begin{equation*}
    Y_{w}(t) = Y_{w}(0) + \Lambda_t,
\end{equation*}
where the $\Lambda_t$ are the estimates in Figure \ref{f:stacked}. Assignment $Z$ is drawn as a Bernoulli with probability 0.5, independently of all potential outcomes and potential treatment variables.

We consider three estimation methods for the $\Lambda_t$: the 2SLS estimator used in Figure \ref{f:stacked}, OLS for the same outcome equation, and a g-method estimator. The g-method estimates come from an additive structural nested mean model, following \cite{robins1994snm}. Specifically, we use a GMM procedure with the moment condition 
\begin{equation*}
    E\left[\mathbf1[w> s]I_{w-s}(A_{w-s} - E[A_{w-s} \mid I_{w-s}=1, X_{w-s}])\left(Y_{w} - \sum_{t=1}^5 R_{wt}\Lambda_t\right)\right]=0,
\end{equation*}
stacked across waves $w$ and $s=0,\ldots, 4$, where $A_w$ is a Bernoulli variable indicating wave $w$ is the period of revascularization, $I_w$ indicates that the period of revascularization is greater or equal to $w$, and $X_w$ is a set of covariates. For covariates, we consider assignment $Z$, the baseline variable determining selection $y_0$, one-wave lagged outcome, and difference in lagged outcomes. We specify $E[A_{w-s} \mid I_{w-s}=1, X_{w-s}]$ as a third degree polynomial in $y_{0}$, linear on lags and differenced lags, with intercepts and coefficients for $Z$ that vary with $w-s$.\footnote{For the first decision period (i.e., $w-s=1$), the lagged outcome is set to baseline and the lagged outcome difference is set to zero; for the second period, the lagged difference is the wave-1 outcome minus baseline; from the third period onward, the lagged difference is the previous outcome minus the two-period lagged outcome.} To estimate the $\Lambda_t$ using this method, we stack the above moment conditions and the OLS moment conditions for estimating $E[A_{w-s} \mid I_{w-s}=1, X_{w-s}]$; we cluster standard errors by individual. 

Table \ref{t:simulations} reports absolute bias (relative to true effects), average standard errors, 95\% coverage (i.e., the share of replications in which confidence intervals cover the true effects at the 95\% level), and power (i.e., the share of replications in which we reject a zero effect at the 5\% level) from 1,000 replications. Panel A averages over waves, while panels B and C are for waves 1 and 5, respectively. IV performs well across sample sizes and waves, with relative absolute bias no larger than 5\% and coverage around 95\% throughout. Nonetheless, it has relatively low precision for small sample sizes and, therefore, small power. OLS, in contrast, is severely biased despite high precision---yielding poor coverage. The smaller bias of the g-method estimator relative to OLS reflects that it partially accounts for selection. However, because selection into treatment depends on the ``prognosis index'', it is substantially more biased than IV. While precise in large samples, the g-method standard errors are larger than those of IV in smaller samples.

\subsection{Sensitivity to Wave Ignorability}
\label{appendix:sensitivity}

This appendix explores the sensitivity of the cumulative effects estimates in Figure \ref{f:stacked} to violations of our strong wave ignorability assumption (Assumption \ref{assump:strong_wave_ignore}). The target causal parameter is the average cumulative effect of immediate exposure for immediate compliers: for wave $w$,
\begin{equation}
\label{eq:target_sensitivity}
    \Lambda_w \equiv E[Y_w(w) - Y_w(0)\mid T_1(1) > T_1(0)],
\end{equation}
with slight abuse of notation relative to the main text. Under Assumptions \ref{assump:independence}-\ref{assump:relevance} and strong wave ignorability, Theorem \ref{corollary:Lambda} describes a 2SLS estimand that identifies the set of $\Lambda_w$ (see also the discussion in Footnote \ref{fn:bloom}). The 2SLS framework is useful for sensitivity analysis because violations of the identification assumptions map to a correlation between the instrument and the error term. The next proposition characterizes that mapping for violations of strong wave ignorability.
\begin{proposition}
\label{prop:sensitivity}
    Suppose Assumptions \ref{assump:independence}-\ref{assump:relevance} hold and let $R_{wt}\equiv \mathbf1[T_w=t]$ for each $t\in\{1,\ldots,\bar w\}$. For each $w\in\{1,\ldots,\bar w\}$, consider the linear model
    \begin{equation}
    \label{eq:lm_sensitivity}
        Y_w = \tilde\phi + \sum_{j=2}^{\bar w}\tilde\delta_j\mathbf{1}[j=w] + \sum_{t=1}^{\bar w} \Lambda_t R_{wt} +  \gamma_w Z + \tilde\varepsilon_w,
    \end{equation}
    where the set of $\Lambda_t$ is as in \eqref{eq:target_sensitivity}, $\Lambda_0\equiv0\equiv \gamma_1$,
    \begin{equation*}
    \begin{split}
        & \gamma_w \equiv 
        \sum_{0\le t'<t\le w}
        E[(Y_w(t)-Y_w(t')-(\Lambda_{t}-\Lambda_{t'}))
        \mathbf1[T_w(1)=t,T_w(0)=t']],
        \quad w\in\{2,\ldots, \bar w\}, \\
        &\tilde\phi \equiv E[Y_1 - \Lambda_1 R_{11}], \\
        & \tilde\delta_j \equiv E\Bigg[Y_j - \sum_{t=1}^{\bar w} \Lambda_t R_{jt} -\gamma_j Z  \Bigg] - E[Y_1 - \Lambda_1 R_{11}], \quad j\in\{2,\ldots,\bar w\}.
    \end{split}
    \end{equation*}
    Then, $E[\tilde\varepsilon_w\mid Z]=0$ for all $w\in\{1,\ldots, \bar w\}$. 
\end{proposition}
\begin{proof}
    For $w=1$, the model simplifies to $Y_1 = \tilde\phi + \Lambda_1 R_{11} + \gamma_1 Z + \tilde\varepsilon_1$. Because $\Lambda_1$ is the average effect for immediate compliers in the first wave and $R_{11}$ is the Bernoulli treatment indicator, the \cite{late94} LATE Theorem implies $E[\tilde\varepsilon_w\mid Z]=0$ for $\tilde\phi= E[Y_1 - \Lambda_1 R_{11}]$ and $\gamma_1=0$. 

    Now fix $w\in\{2,\ldots,\bar w\}$. By definition of $\tilde\phi$ and $\tilde\delta_w$, $E[\tilde\varepsilon_w] =0$. Because $Z$ is binary, it suffices to show that $E[Z\tilde\varepsilon_w]=0$, which is equivalent to
    \begin{equation*}
        \gamma_wE[Z] = E\Bigg[Z\Bigg(Y_w -\tilde\phi - \tilde\delta_w -  \sum_{t=1}^{\bar w} \Lambda_t R_{wt}  \Bigg)\Bigg].
    \end{equation*}
    It then follows from Assumptions \ref{assump:independence} and \ref{assump:relevance}, $Y_w=Y_w(0) + \sum_{t=1}^w (Y_w(t) - Y_w(0))R_{wt}$, and the definition of $\tilde\psi,\tilde\delta_w$ that $E[Z\tilde\varepsilon_w]=0$ if, and only if, 
    \begin{equation*}
    \begin{split}
        \gamma_w\text{Var}[Z] 
        =& E\Bigg[(Z-E[Z]) \Bigg(\sum_{t=1}^w (Y_w(t) - Y_w(0) - \Lambda_t)R_{wt}\Bigg)\Bigg] \\
        =& E[(Z-E[Z])Z]E\Bigg[\sum_{t=1}^w (Y_w(t) - Y_w(0) - \Lambda_t)\mathbf1[T_w(1)=t]\Bigg] \\
        &+E[(Z-E[Z])(1-Z)]E\Bigg[\sum_{t=1}^w (Y_w(t) - Y_w(0) - \Lambda_t)\mathbf1[T_w(0)=t]\Bigg].
    \end{split}
    \end{equation*}
    Therefore, Assumption \ref{assump:exposure_monotonicity} implies that $E[Z\tilde\varepsilon_w] =0$ if, and only if,
    \begin{equation*}
    \begin{split}
        \gamma_w 
        &= E\Bigg[\sum_{t=1}^w (Y_w(t) - Y_w(0) - \Lambda_t)\mathbf1[T_w(1)=t]\Bigg] -  E\Bigg[\sum_{t=1}^w (Y_w(t) - Y_w(0) - \Lambda_t)\mathbf1[T_w(0)=t]\Bigg] \\
        & = \sum_{0\le t'<t\le w}
        E[(Y_w(t)-Y_w(t')-(\Lambda_{t}-\Lambda_{t'}))
        \mathbf1[T_w(1)=t,T_w(0)=t']].
    \end{split}
    \end{equation*}
\end{proof}

Proposition \ref{prop:sensitivity} parameterizes violations of strong wave ignorability in terms of the $\gamma_w$ coefficients, in line with earlier sensitivity methods for linear instrumental variable models (e.g., \cite{conley12plausibly, mastenpoirier_salvaging, cinellihazlett}). Specifically, $\gamma_w$ is generally nonzero when wave-$w$ compliers moved from some treatment margin $t^\prime$ to another margin $t$ (with $T_w(1)=t$ and $T_w(0)=t^\prime$) have systematically different incremental effects than that of immediate compliers, $\Lambda_t-\Lambda_{t^\prime}$. The proposition shows that these deviation parameters arise as direct effects of the instrument $Z$ in the stacked model from Theorem \ref{corollary:Lambda}.

We apply this result to the general sensitivity framework of \cite{conley12plausibly} (henceforth, CHR) to explore robustness to nonzero $\gamma_w$. The CHR approach builds on the observation that if the vector of $\gamma_w$ was known, a 2SLS of the modified outcome $Y_w-\gamma_w Z$ on the set of $R_{wt}$ instrumented by $Z$ and $Z\times\mathbf1[j=w],j\in\{2,...,\bar{w}\}$ would estimate the set of $\Lambda_t$. In practice, however, the set of $\gamma_w$ is unknown. We follow CHR in postulating a bounded set of possible values for the vector of $\gamma_w$, $\Gamma\subseteq\mathbb{R}^{\bar w}$, with $\gamma_1\equiv0$. Then, for each point in $\Gamma$, we can construct confidence intervals for each $\Lambda_t$. Taking the union of the confidence intervals over $\Gamma$ guarantees the prescribed coverage as long as the true vector of $\gamma_w$ lies on $\Gamma$. CHR approximate such a union of confidence intervals by discretizing $\Gamma$ and taking unions over the grid points.

We consider two approaches to specify $\Gamma$, both parametrized by bounds on the possible incremental effect deviations underlying each $\gamma_w$:
\begin{equation*}
   \Delta_w^{t,t'}\equiv E[Y_w(t) - Y_w(t')\mid T_w(1)=t, T_w(0)=t'] - (\Lambda_t - \Lambda_{t'}).
\end{equation*}
Specifically, we assume that $|\Delta_w^{t,t'}|$ is at most a $\rho\in[0,1]$ fraction of the absolute difference in cumulative effects:
\begin{equation}
    \label{eq:boundonDelta}
    |\Delta_w^{t,t'}|\le \rho|\Lambda_t - \Lambda_{t'}|.
\end{equation}
Our first approach is conservative. Because the probabilities $P[T_w(1)=t, T_w(0)=t']$ are not identified, it takes the worst-case by setting $|\gamma_w| \le \max\{\rho|\Lambda_t - \Lambda_{t'}|: 0\le t'<t\le w\}.$ The second approach instead sets $$|\gamma_w|\le\sum_{0\le t<w} (E[\mathbf1[T_w=t] \mid Z=0] - E[\mathbf1[T_w=t] \mid Z=1])\rho|\Lambda_w - \Lambda_t|,$$
which are valid bounds on $\gamma_w$ under IMCO (Assumption \ref{assump:imco}) and \eqref{eq:boundonDelta} since then
\begin{equation*}
    \begin{split}
       & \gamma_w=\sum_{0\le t<w}P[T_w(1)=w, T_w(0)=t]\Delta_w^{w,t}, \\
       & P[T_w(1)=w, T_w(0)=t'] = E[\mathbf1[T_w=t] \mid Z=0] - E[\mathbf1[T_w=t] \mid Z=1].
    \end{split}
\end{equation*}
In practice, because the target parameters $\Lambda_t$ are not identified when strong wave ignorability doesn't hold, we use the estimates in Figure \ref{f:stacked} to bound $|\Delta_w^{t,t'}|$. This yields bounds that are compatible with the magnitude of effects in the ISCHEMIA setting.

Figure \ref{f:sensitivity} reports union-of-confidence-interval estimates under the two sensitivity calibrations, for $\rho\in\{0,0.25,0.5\}$ (the $\rho=0$ results replicate the baseline Figure \ref{f:stacked} estimates). The results suggest that the estimated average cumulative effects of revascularization are not overly sensitive to violations of wave ignorability, especially for the SAQ summary score. Under the IMCO-based calibration, the data supports the main results for violations up to $\rho=0.25$. At this value, the null of no effect is rejected at the 5 percent level for both outcomes in all follow-up waves except the last. For the SAQ summary score, the first three follow-up effects remain statistically different from zero even when $\rho=0.5$. The more conservative max-bound calibration leads to wider intervals, as expected, while still suggesting positive effects on the SAQ summary score at intermediate follow-up horizons. Taken together, these results build confidence in the robustness of our baseline findings.

\clearpage

\begin{figure}[H]
    \caption{Differences in Exposure CDF by Treatment Assignment Status}
    \label{f:stoch_dom}
    \centering
    \begin{minipage}{16.5cm}
        {\centering{\includegraphics[width=0.65\linewidth]{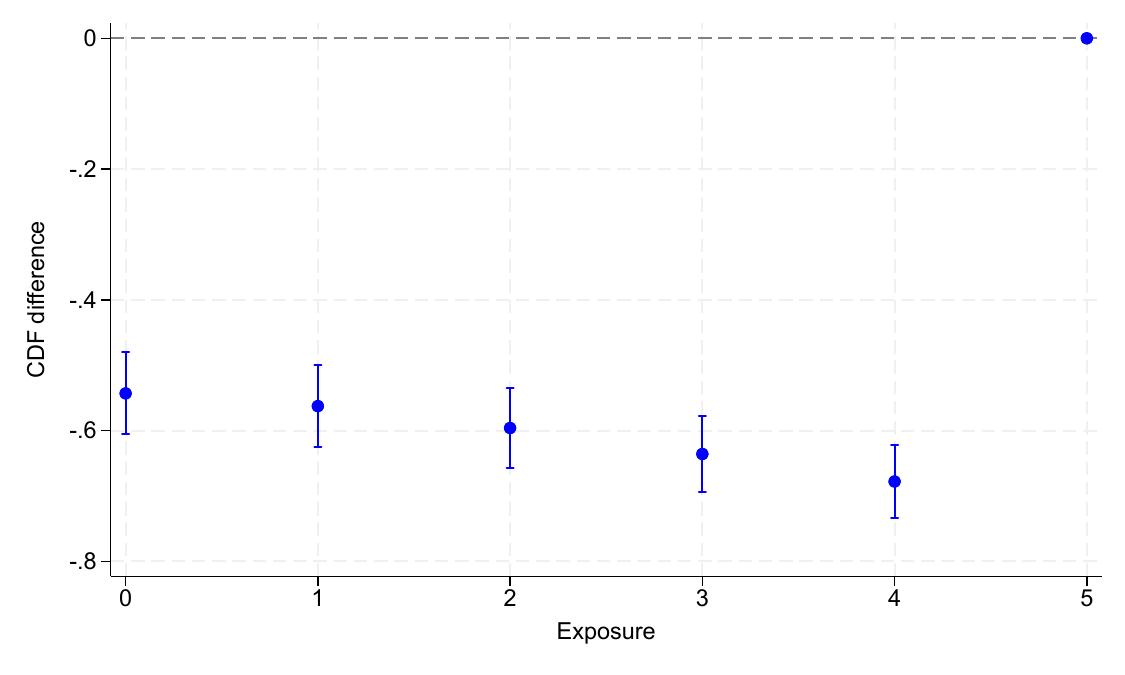}\par}}
        \footnotesize \textit{Notes:} This figure plots estimated differences in CDFs between the treatment and control groups for exposure levels $T_w$. For each exposure year $t$, the estimate comes from the regression of the indicator $1[T_w \le t]$ on treatment assignment status $Z$. Vertical bars denote the 95 \% robust confidence intervals. This figure is based on observations from wave 5.  \par
    \end{minipage}
\end{figure}

\pagebreak
\begin{figure}[H]
    \caption{Permitted and Ruled-Out Paths for Incremental Effects in \cite{Bowden2025}}
    \label{f:bowden}
    \centering
    \begin{minipage}{16.5cm}
        {\centering{\includegraphics[width=0.65\linewidth]{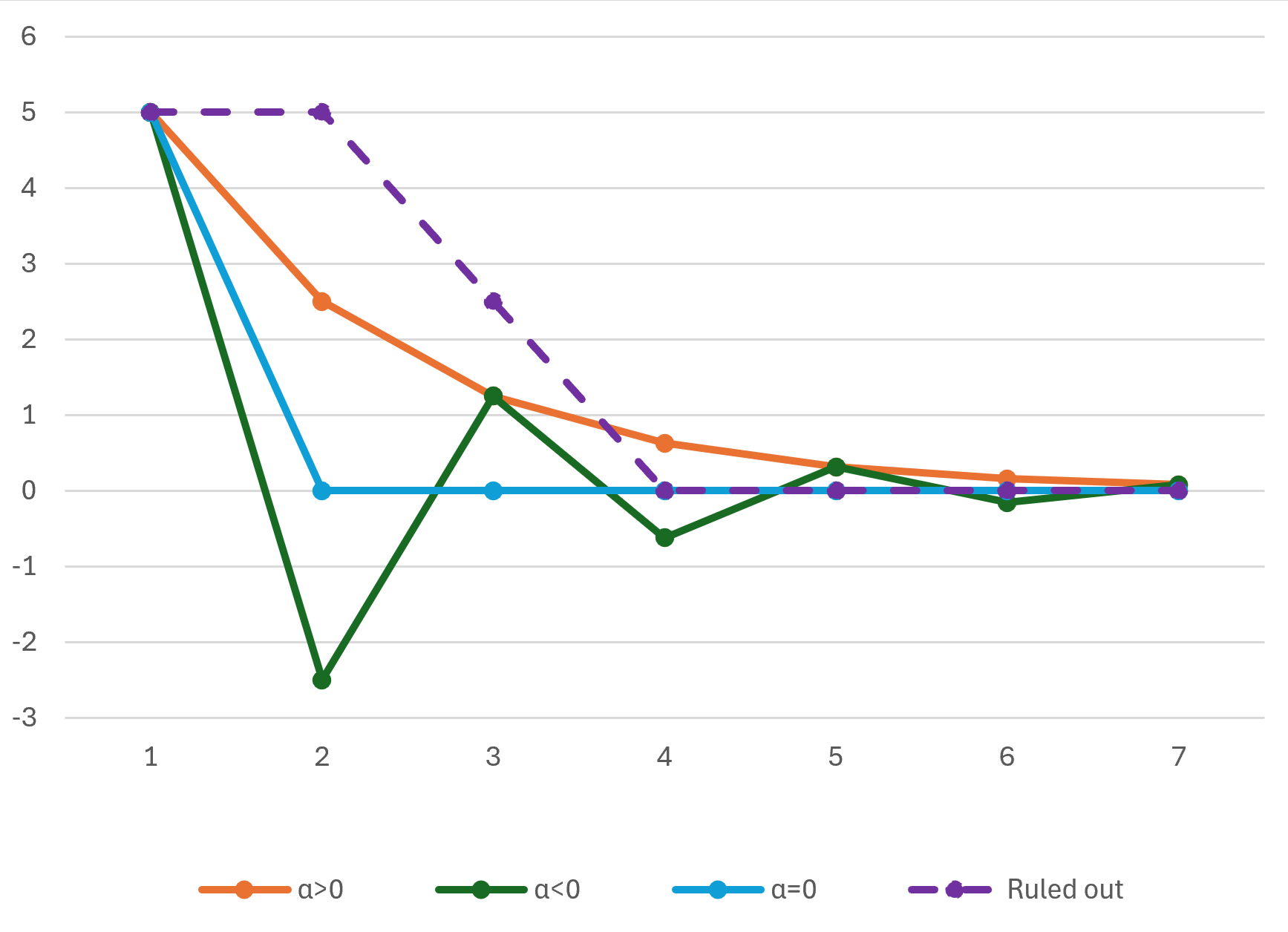}\par}}
        \footnotesize \textit{Notes:} The first three series in this figure illustrate possible paths for incremental effects $\lambda_t$ under the \cite{Bowden2025} parameterization, \eqref{eq:bowden_geometric}, for $\beta=5$. The first series sets $\alpha=0.5$, the second series sets $\alpha=-0.5$ and the third series sets $\alpha=0$. The fourth series in this figure is ruled out by Equation \eqref{eq:bowden_geometric}.\par
    \end{minipage}
\end{figure}

\pagebreak
\begin{figure}[H]
    \caption{Sensitivity Analysis for Average Cumulative Causal Effects}
    \label{f:sensitivity}
    \centering
    \begin{minipage}{16.5cm}
        {\centering{\includegraphics[width=\linewidth]{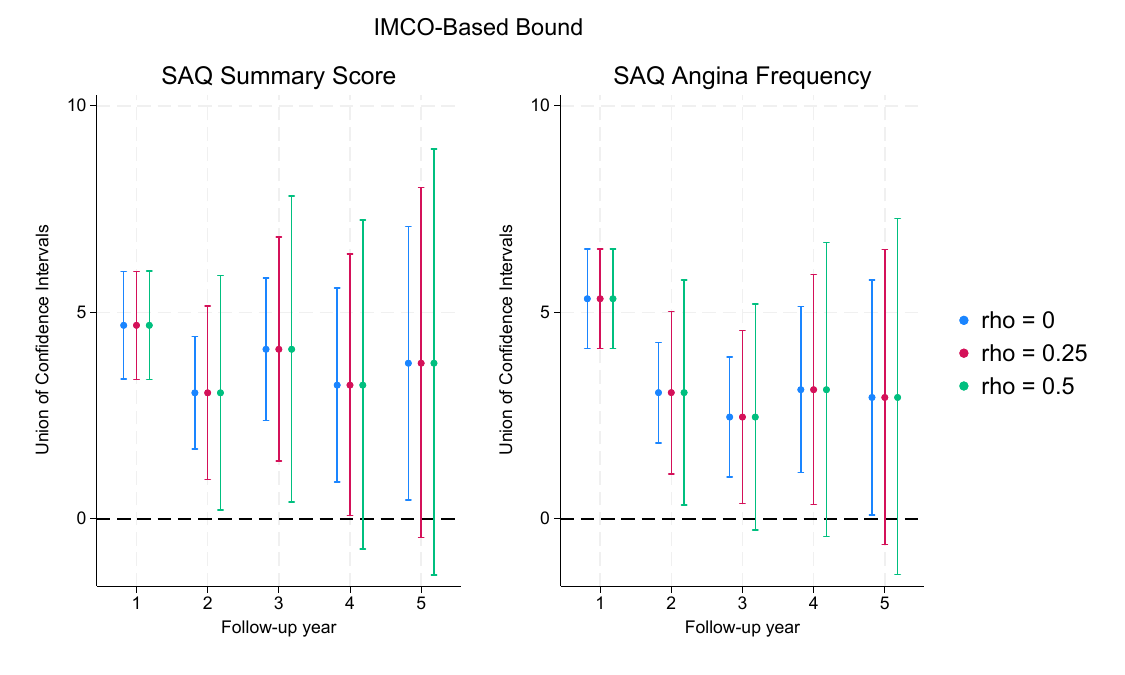}}\par}
        {\centering{\includegraphics[width=\linewidth]{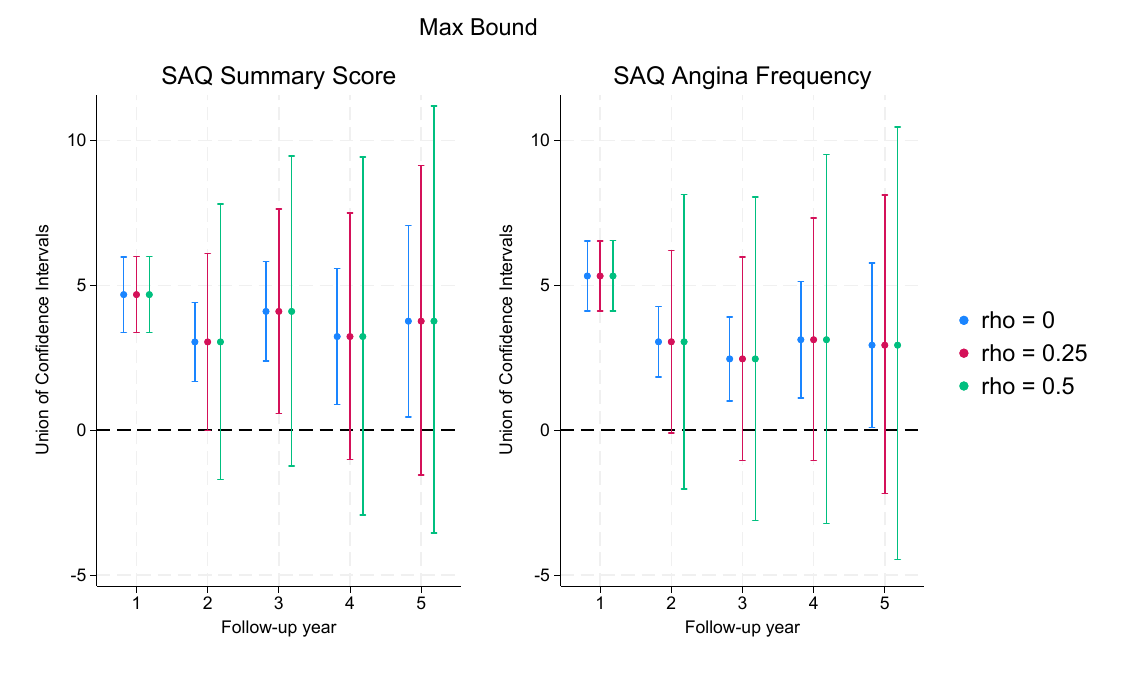}}\par}
        \footnotesize \textit{Notes:} This figure reports sensitivity results for the average cumulative effect of immediate exposure on SAQ Summary Score and SAQ Angina Frequency for immediate compliers. The top panel shows the IMCO-based results for $\rho\in\{0,0.25,0.5\}$, while the bottom panel shows results based on max-bounds for the same values of $\rho$. For each value of $\rho>0$, the set of possible values for the vector of $\gamma_w$ is discretized using 5 grid points for each $\gamma_w$, except $\gamma_1$ which is set to zero. The point estimates reported here equal the ones in the main text (Figure \ref{f:stacked}). Appendix \ref{appendix:sensitivity} describes the sensitivity analysis procedure in detail.  \par
    \end{minipage}
\end{figure}

\clearpage
\begin{table}[H]
    \caption{Comparison with \cite{Bowden2025}}
    \label{t:bowden}
        \centering
    \begin{minipage}{6.5in}
        {\centering\includegraphics[width=16cm]{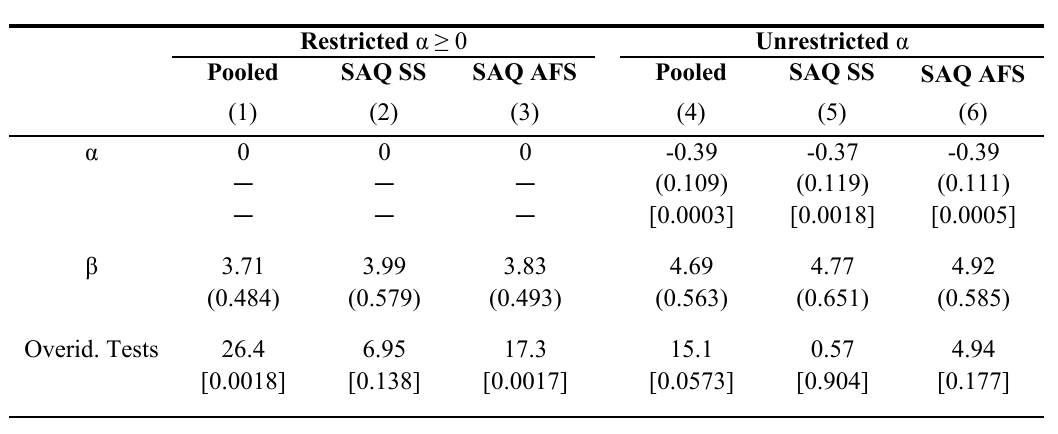}\par}
        \footnotesize \textit{Notes:} This table reports efficiently-weighted two-step GMM estimates based on the moment condition in Equation \eqref{eq:bowden_ischemia}. Column 1 reports the results from a model imposing $\alpha=0$ when the outcomes, SAQ Summary Score and SAQ Angina Frequency Score, are pooled. Columns 2 and 3 report results using the same model for each outcome separately. Columns 4, 5, and 6 report results for the same outcomes, respectively, for a model where $\alpha$ is unrestricted. Estimates were computed with controls for baseline angina frequency scores and enrollment region. Individual-clustered standard errors are reported in parentheses, and p-values for overidentification tests are reported in square brackets. \par 
    \end{minipage}
\end{table}

\pagebreak
\begin{table}[H]
    \caption{Simulation Results Comparing IV, OLS, and G-method}
    \label{t:simulations}
        \centering
    \begin{minipage}{6.5in}
        {\centering\includegraphics[width=11cm]{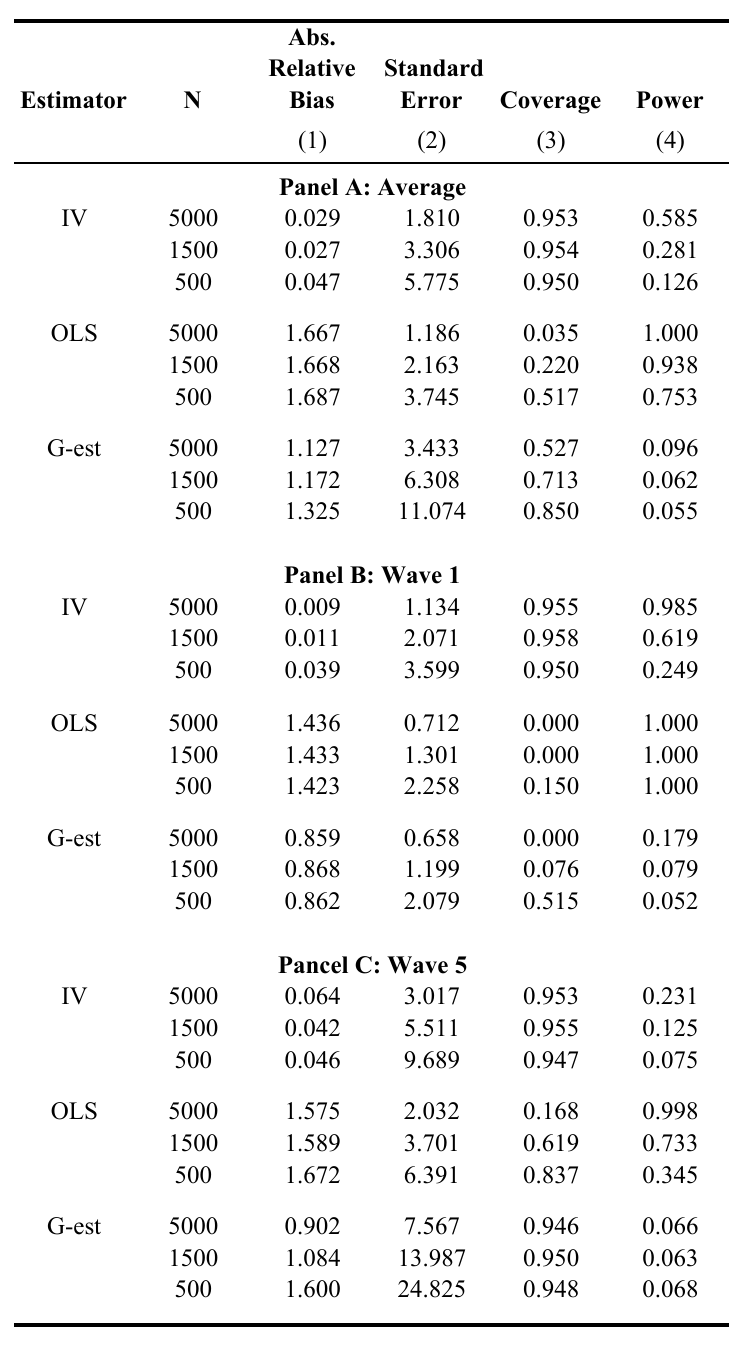}\par}
        \footnotesize \textit{Notes:} This table reports simulation results comparing the finite sample performance of IV, OLS, and g-method for 1,000 replications. Sample sizes considered are 500, 1,500, and 5,000. Column 1 reports absolute bias relative to true causal effects. Column 2 reports standard errors averaged across replications. Column 3 reports 95\% nominal coverage averaged across replications (i.e., the share of replications in which confidence intervals cover the true effects at the 95\% level). Column 4 reports the share of replications in which the null of no effect is rejected at the 5\% level. Panel A reports results averaged across waves, while panels B and C report results for waves 1 and 5, respectively. Appendix \ref{appendix:simulations} details the data-generating process and estimators. \par 
    \end{minipage}
\end{table}

\end{document}